\newbox\boxzero
\patchcmd{\numcases}{\setbox\z@}{\setbox\boxzero}{}{}
\patchcmd{\numcases}{\kern\wd\z@}{\kern\wd\boxzero}{}{}
\patchcmd{\endnumcases}{\displaystyle \box\z@}{\displaystyle\box\boxzero}{}{}
\newtheorem{remark}{\bf Remark}
\newtheorem{lemma}{\bf \underline{Lemma}}
\newcommand{\maxtt}{\mathtt{max}}
\newcommand{\hSP}{h_{\mathtt{SP}}}
\newcommand{\hSS}{h_{\mathtt{SS}}}
\newcommand{\hSE}{h_{\mathtt{SE}}}
\newcommand{\gUP}{g_{\mathtt{UP}}}
\newcommand{\gUE}{g_{\mathtt{UE}}}
\newcommand{\gUS}{g_{\mathtt{US}}}
\newcommand{\SP}{\mathtt{SP}}
\newcommand{\ttSS}{\mathtt{SS}}
\newcommand{\SE}{\mathtt{SE}}
\newcommand{\UP}{\mathtt{UP}}
\newcommand{\E}{\mathtt{E}}
\newcommand{\UAV}{\mathtt{U}}
\newcommand{\Ss}{\mathtt{S}}
\newcommand{\Pp}{\mathtt{P}}
\newcommand{\SEC}{\mathtt{sec}}
\newcommand{\st}{\mathtt{s. t.}}
\begin{document}

\title{\LARGE UAV-Assisted Secure Communications in Terrestrial Cognitive Radio Networks: Joint Power Control and 3D Trajectory Optimization\\
}
\author{
\IEEEauthorblockN{Phu X. Nguyen, Van-Dinh Nguyen, Hieu V. Nguyen,  and Oh-Soon Shin}\\

\thanks{Part of this paper was presented at the IEEE Consumer Communications \& Networking Conference (CCNC), Las Vegas, USA, in Jan. 2019 \cite{CCNC}.}
\thanks{P. X. Nguyen, H. V. Nguyen, and O.-S. Shin are with  Soongsil University, Seoul 06978, South Korea (e-mail: nxphu.1994@gmail.com; hieuvnguyen@ssu.ac.kr; osshin@ssu.ac.kr).}
\thanks{V.-D. Nguyen is with SnT – University of Luxembourg, L-1855 Luxembourg. He was also with  Soongsil University, Seoul 06978, South Korea (email: dinh.nguyen@uni.lu). }
}

\maketitle

\begin{abstract}
This paper considers secure communications for an underlay cognitive radio network (CRN) in the presence of an external eavesdropper (Eve). The secrecy performance of CRNs is usually limited by the primary receiver's interference power constraint. To overcome this issue, we propose to use an unmanned aerial vehicle (UAV) as  a friendly jammer to interfere Eve in decoding the confidential message from the secondary transmitter (ST). Our goal is to jointly optimize the transmit power and UAV's trajectory in the three-dimensional (3D) space to maximize the average achievable secrecy rate of the secondary system. The formulated optimization problem is nonconvex due to the nonconvexity of the objective and non-convexity of constraints, which is very challenging to solve. To obtain a suboptimal but efficient solution to the problem, we first transform the original problem into a more tractable form and develop an iterative algorithm for its solution by leveraging the inner approximation framework. We further extend the proposed algorithm to the case of imperfect location information of Eve, where the average worst-case secrecy rate is considered as the objective function. Extensive numerical results are provided to demonstrate the merits of the proposed algorithms over existing approaches.																																																																																																											
\end{abstract}

\begin{IEEEkeywords}
Cognitive radio networks, unmanned aerial vehicles, inner approximation, trajectory optimization, physical layer security.
\end{IEEEkeywords}

\section{Introduction}
Recently, the rapidly increasing number of mobile devices and multimedia services have made radio spectrum scarce and expensive resource \cite{FCC, Tehrani, Song_Cog}. To exploit spectrum more efficiently, cognitive radio has been widely considered as a promising solution \cite{TragosCST13}, which enables to learn the surrounding context and to adjust the operating parameters, thereby adapting to  changes of radio frequency environment. Accordingly, secondary devices are allowed to use the licensed bands simultaneously, making cognitive radio a potential approach for future wireless networks. However, various malicious wireless devices can also opportunistically access the licensed spectrum, which might make cognitive radio networks (CRNs) vulnerable \cite{Z_Shu, Sharma, F_Zhu, Nguyen:TIFS:16,NguyenTCCN17}. For instance, when a secondary transmitter (ST) transmits confidential messages to a secondary receiver (SR), an external eavesdropper (Eve, also known as a passive attacker) probably overhears and intercepts the legitimate transmissions. 

Traditionally, the complexity-based cryptography can be effective when the computational ability of Eves is too restricted to decipher  secret key. Nevertheless,  Eve's computing power is evolving consistently, while a trust infrastructure for guaranteeing confidential communications is expensive to deploy. To overcome such challenges, physical-layer security (PLS)  has been introduced as a potential technique to prevent eavesdropping without a secure cryptographic protocol \cite{2, Wyner}. The key idea of PLS is to exploit random characteristics of the wireless channel to degrade the Eve's decoding capability. To make the PLS viable, jamming noise (JN) can be embedded at the transmitter and transmitted along with the information signals  to degrade the channel quality of  Eve \cite{NguyenTCCN17, Bouabdellah, NLi_XTao}. A large effort has been made  to bring the PLS a step closer to practice \cite{WuIT16,NguyenJSAC18, Bassily, Q_Li}. However, most of the conventional JN-based schemes are based on the ground jammers at the fixed locations, leading to several major challenges. First, when jammers are set far away from Eves, the effect of JN  is significantly reduced, and thus the secrecy rate deteriorates. Second, for JN to be effective, the legitimate transmitter needs to be aware of the channel state information (CSI) between itself and Eve. Since Eves are usually passive, it may not be possible to obtain their instantaneous CSI. Finally, in CRNs, the secrecy performance improvement of the secondary system using JN may also affect the primary system; the interference power to the primary receiver (PR) may exceed the predefined threshold.

In recent years, unmanned aerial vehicle (UAV) has attracted significant interest in many applications, such as agriculture, traffic control, military, photography, and package delivery \cite{YengComM17, M_Mozaffari, Gupta, Relaying_Systems, Sotheara}. PLS can benefit from the application of UAV as well, by making UAV send a JN to Eves. Compared with the on-ground jammer, there are two obvious advantages of UAV-aided JN: $i$) Eve will  undergo strong interference due to the line-of-sight (LoS) dominated UAV-Eve channel; $ii$) A UAV operating in the three-dimensional (3D) space at the altitude of a few hundred meters is able to fly to an optimal location to cause interference to the channel between ST and Eve by emitting a friendly JN. Thus,  it is expected that UAV-aided JN can provide better secrecy performance as compared to the
conventional on-ground jamming.

\subsection{Related Works}
PLS of CRNs has been well studied recently, which dealt with specific security risks due to the broadcasting nature of the wireless transmission media \cite{Z_Shu, Sharma, F_Zhu, Nguyen:TIFS:16,NguyenTCCN17,11,NguyenTVTPLS}. In general, these works mainly focused on secure communications for the secondary system \cite{Z_Shu, Sharma, F_Zhu, Nguyen:TIFS:16} and the primary system \cite{11,NguyenTVTPLS}, where power control is an effective way to control the interference, assuming that the CSI of the ST-PR links is already known. In \cite{NguyenTCCN17}, a cooperative transmission strategy was proposed to maximize the
minimum secrecy rate of the secondary system while satisfying the minimum secrecy rate achievable for the primary system. The common technique used in the above works is to make JN and the desired signal concurrently transmitted at the same transmitter (ST or primary transmitter), which limits the effectiveness of JN. The transmitter needs to be equipped with multiple antennas to perform beamforming; otherwise the  legitimate user must have better channel condition than Eve, which is too optimistic in practice.

%

The security performance of ground users in the presence of a ground Eve is improved by using UAV as a mobile relaying \cite{12}. In this work,  UAV is assumed to fly with a fixed trajectory, leading to a suboptimal solution. In \cite{6}, a UAV is used to transmit a friendly JN with the aim of interfering the channel between ST and Eve, where the security performance of UAV-to-ground communication is maximized by jointly optimizing  the UAV trajectory and the transmission power. The authors in \cite{14} proposed  a cooperative jamming UAV to enable confidential air-to-ground communications between a mobile UAV and ground nodes, where the user scheduling, UAV's trajectory and the transmit power are jointly optimized to maximize the minimum secrecy rate among ground nodes. In general, the location information of Eves is assumed to be perfectly known \cite{12, 6, 13, 14}. A practical scenario was considered in \cite{ImperfectLocation} in which the location information of Eves is unknown. Notably, the trajectories of UAV in the 3D space were not considered in \cite{ImperfectLocation}, presumably due to the nonconvexity and complexity of the constraints related to UAV mobility.
\subsection{Main Contribution}
In this paper, we study the PLS for CRNs, in which the secure communication of  secondary system is guaranteed by using a UAV as a friendly jammer.  UAV is controlled to move in a period of time that consists of many intervals, called time slots. Such a time slot is designed to be suitable with the motion characteristics of UAV in the 3D space. We first formulate the average achievable secrecy rate maximization problem over all time slots, where UAV's trajectory and power allocation are jointly optimized under the transmit power constraints,  interference power at the PR caused by both UAV and ST, and mobility capability of UAV. The formulated problem is highly nonconvex due to strong coupling between optimization variables, which makes hard find the globally optimal solution. The methods used in \cite{6,13,14, ImperfectLocation} mainly utilize the inner convex approximation to tackle subproblems. Herein, each subproblem is a single variable optimization problem, which is divided from the original optimization problem. Such an approach often results in a slow convergence rate, and yet, the convergence of these proposed heuristic method is not theoretically guaranteed.

To the best of our knowledge, our earlier work in \cite{CCNC} is the first work that aims at improving the secrecy rate of the on-ground secondary system by using UAV-enabled cooperative JN. Differently from \cite{CCNC}, this paper considers the following completely new issues: $i)$ We aim at finding the optimal trajectory of UAV in the 3D space instead of the two-dimensional (2D) space, by jointly optimizing its altitude as well as horizontal location; $ii)$ Towards a realistic scenario, the imperfect location information of Eve  is also considered, making the problem even more challenging to solve. As a result, the main
contributions of the paper are summarized as follows.
\begin{itemize}
\item We propose a new model for PLS in CRNs to maximize the average achievable secrecy rate of the
secondary system  by exploiting UAV-enabled JN.
\item We formulate a new optimization problem that jointly optimizes the transmit power and UAV's trajectory subject to the PR's interference power constraint.  We first consider the perfect CSI, including Eve, to investigate benefits of our new model, for which an efficient and low-complexity algorithm is proposed. The key idea of our approach is to transform the original nonconvex problem into a more tractable form and then develop  new inner approximations (IAs) of
 nonconvex parts \cite{Marks:78,Beck:JGO:10}, which guarantees convergence
at least to a locally optimal solution. 
\item When the location information of Eve is imperfect and Eve is assumed to be distributed in a circular region with a given radius, we reformulate the optimization problem by considering  the worst-case secrecy rate. The main difficulty of this problem comes from the rate function of Eve, which is further shaped to have a set of convex constraints by combining tools from IA framework
and $S$-procedure.
\item Extensive numerical results are provided to demonstrate that the proposed
algorithms have low complexities, i.e., in terms of per-iteration computation and the number of iterations, and to show great performance improvement over existing
schemes. Numerical results also confirm the effectiveness of the proposed approach
that optimizes the altitude of Eve as well as the horizontal location.
\end{itemize}


\subsection{Paper Organization and Notation}
The remainder of this paper is organized as follows. The system model is introduced in Section II. The optimization problems and the proposed algorithms under perfect and imperfect location information of Eve are provided in Section III and Section IV, respectively. Numerical results are given in Section V. Finally, Section VI concludes the paper.

\textit{Notation}: Bold lower and upper case letters denote vectors
and matrices, respectively. $\mathbb{E}\{\cdot\}$ represents the expectation of random variables. $\nabla$ denotes the gradient of a function. The superscript $(\cdot)^{T}$ denotes the transpose of a matrix. $\mathbf{A}  \succeq 0$ indicates that $\mathbf{A}$ is a positive semidefinite matrix. $\left<\mathbf{a},\mathbf{b}\right>$ is the inner product of two vectors $\mathbf{a}$ and $\mathbf{b}$. $\ln(X)$ denotes the natural logarithm of $X$.

\section{System Model}

\begin{figure}[!ht]
	\begin{center}
		\includegraphics[width=0.8\columnwidth] {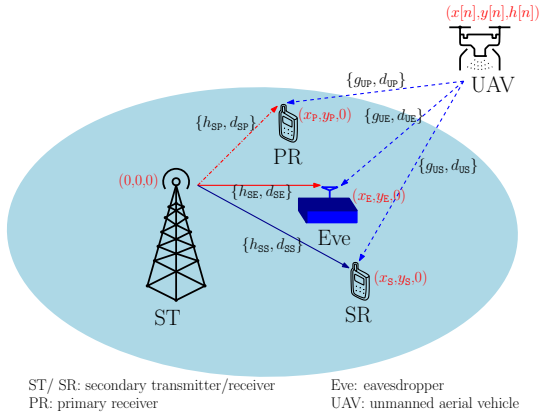}
	\end{center}
	\caption{Illustration of a CRN with a UAV-aided JN and an external Eve.\label{fig:systemmodel}}
\end{figure}

\subsection{Transmission Model}
 We consider an on-ground CRN consisting of an ST and an SR in the presence of a PR and an Eve, as illustrated in Fig. \ref{fig:systemmodel}. Herein, Eve endeavors to intercept and overhear the legitimate transmission between  ST and  SR in the secondary network. In order to further enhance the PLS of CRN, we propose to use  UAV as a friendly jammer   to degrade the eavesdropping channel. Let us define the 3D space $\mathcal{T}\triangleq\{(x,y,z)|x,y,z\in\mathbb{R}\}$. The positions of ground nodes (ST, SR, PR and Eve) in the 3D-space model are expressed as $\mathbf{c}_{\mathtt{ST}}\triangleq\left(0,0,0\right)$, $\mathbf{c}_{\mathtt{S}}\triangleq\left(x_\Ss,y_\Ss,z_\Ss\right)$, $\mathbf{c}_{\mathtt{P}}\triangleq\left(x_\Pp,y_\Pp,z_\Pp\right)$, and $\mathbf{c}_{\mathtt{E}}\triangleq\left(x_\E,y_\E,z_\E\right)$, respectively. Herein, the SR, PR and Eve are located on the ground, i.e., $z_\Ss = z_\Pp = z_\E = 0$. 

The predefined time interval $ T $ of UAV  is split into $N$ time slots of equal length, i.e.,  the duration of each time slot is given as $\delta_t=T/N$. Note that $ N $ must be large enough to guarantee a  small interval per time slot, such that in each time slot the UAV's location is almost unchanged. Thus, we define the time-varying horizontal coordinate of UAV as $\mathbf{c}_{\mathtt{U}}[n]\triangleq\left(x_{\mathtt{U}}\left[n\right], y_{\mathtt{U}}\left[n\right], z_{\mathtt{U}}\left[n\right]\right)$, $\forall n\in\mathcal{N}\triangleq\{1,2,\cdots,N\}$, where the altitude of UAV is limited in the range $h^{\min} \leq z_{\mathtt{U}}\left[n\right] \leq h^{\max}$. The UAV is assumed to move from the initial position $\mathbf{c}_{\mathtt{U}}[0]\triangleq\left(x_{0},y_{0},h_{0}\right)$ to the final predefined position $\mathbf{c}_{\mathtt{U}}[N+1]\triangleq\left(x_{f},y_{f},h_{f}\right)$.  Furthermore, the maximum velocity constraint can be formulated as
$\Vert \bar{\mathbf{q}}_{\mathtt{v}}'(t) \Vert \leq V_{\max},\enspace 0 \leq t \leq T$, where $\bar{\mathbf{q}}_{\mathtt{v}}'(t)$  and $V_{\max}$ are the derivative of the UAV's position with respect to $t$ and the maximum speed of UAV, respectively. Accordingly, for a small interval of time slot $\delta_t$, the mobility constraints of UAV can be expressed as
\begin{IEEEeqnarray}{lll}\label{mobility}
	&h^{\min} \leq z_{\mathtt{U}}\left[n\right] \leq h^{\max},\ \forall n\in\mathcal{N}, \IEEEyessubnumber\label{1}\\
	&f_{\mathtt{d}}(\mathbf{c}_{\mathtt{U}}[n],\mathbf{c}_{\mathtt{U}}[n-1]) \leq L^{2}_{\max},\ \forall n\in\mathcal{N},\qquad \IEEEyessubnumber\label{2}\\
	&f_{\mathtt{d}}(\mathbf{c}_{\mathtt{U}}[N+1],\mathbf{c}_{\mathtt{U}}[N])=0,\IEEEyessubnumber\label{4}\qquad
\end{IEEEeqnarray}
where $ L_{\max}\triangleq V_{\max}\delta_t $ and $ f_{\mathtt{d}}(\mathbf{a},\mathbf{b}) \triangleq (x_{a}-x_{b})^{2}+(y_{a}-y_{b})^{2}+(z_{a}-z_{b})^{2} $, with $ \mathbf{a}\triangleq(x_{a},y_{a},z_{a})$ and $ \mathbf{b}\triangleq(x_{b},y_{b},z_{b})\in\mathcal{T} $.

\subsection{Achievable Secrecy Rate}
We assume that the air-to-ground channels are modeled as LoS channels. The  distances between  UAV and ground nodes are calculated as $ d_{\mathtt{Ux}}[n] \triangleq f_{\mathtt{d}}(\mathbf{c}_{\mathtt{x}},\mathbf{c}_{\mathtt{U}}[n])  $, for $\mathtt{x}\in\{\mathtt{S},\mathtt{P},\mathtt{E}\}$. At the time slot $n$, the channel gains from the UAV to SR, PR and Eve,  denoted by $\gUS$, $\gUP$ and $\gUE$, respectively, can be modeled according to the free-space path loss \cite{Relaying_Systems, 12, 6,13,14, ImperfectLocation}, i.e., $g_{\mathtt{Ux}}[n]=\rho_{0}(d_{\mathtt{Ux}}[n])^{-2}$, where $\rho_{0}$ is  the channel gain at the reference distance $d_{0}=1$ m. The terrestrial channels experience quasi-static independent 
 Rayleigh fading. Therefore, the channel gains of the links from the ST to SR, PR and Eve, denoted by $\hSP, \hSS$ and $\hSE$, respectively, can be expressed as $h_{\mathtt{Sx}}=\rho_{0}(d_{\mathtt{Sx}})^{-\varphi}\psi_{\mathtt{Sx}}$, where $ d_{\mathtt{Sx}}\triangleq f_{\mathtt{d}}(\mathbf{c}_{\mathtt{x}},\mathbf{c}_{\mathtt{ST}}) $;  $\varphi$ and $\psi_{\mathtt{Sx}}$ are the path loss exponent and an exponential random variable with unit mean, respectively.

 The achievable rates at SR and Eve for decoding the messages from ST at the time slot $n$ can be expressed as \cite{CCNC, 6}
\begin{IEEEeqnarray}{rCl}
R_{\Ss}[n]&=&\mathbb{E}_{\hSS}\Bigl\{\log_{2}\Bigl(1+\dfrac{p_{\Ss}[n]\hSS}{p_{\UAV}[n]\gUS[n]+\sigma^{2}}\Bigr)\Bigr\},\IEEEyessubnumber\label{15}\\
R_{\E}\left[n\right]&=&\mathbb{E}_{\hSE}\Bigl\{\log_{2}\Bigl(1+\dfrac{p_{\Ss}[n]\hSE}{p_{\UAV}[n]\gUE[n]+\sigma^{2}}\Bigr)\Bigr\},\IEEEyessubnumber\label{16}
\end{IEEEeqnarray}
where $p_{\Ss}[n]$ and $p_{\UAV}[n]$ are the transmit powers at the ST and UAV, respectively, and $\sigma^{2}$ is the power of additive white Gaussian noise (AWGN). For total $N$ time slots,  the average achievable secrecy rate for the secondary system can be expressed as \cite{15}
\begin{align}
	R_{\SEC}\triangleq\dfrac{1}{N}\sum_{n\in\mathcal{N}}\bigl[R_{\Ss}[n]-R_{\E}[n]\bigl]^{+},
\end{align}
where $[x]^+\triangleq\max\{0,x\}$.

\section{Proposed Algorithm with Perfect Location Information of Eavesdropper} \label{Alg Under Perfect}
In this section, the Eve's location information is
assumed to be perfectly known at the transmitters (ST and UAV). This assumption is of interest in some scenarios. For instance, at the beginning of the time interval, both SR and Eve perform handshaking with ST by sending pilot signals. However, only SR is scheduled to be served, while Eve is treated as an untrusted user. In addition, the system performance under the assumption of  perfect location information of Eve will act as an upper bound for the practical system, providing a reference of the potential benefit of using UAV-aided JN.

\subsection{Optimization Problem Formulation}
In this paper, the key idea is to exploit the advantage of UAV's mobility in combination with developing an effective power control scheme to enhance the security performance of the secondary system while satisfying the transmit power constraints and the PR's interference power constraint. By defining $\mathbf{c}\triangleq\{\mathbf{c}_{\mathtt{U}}[n]\}_{n\in\mathcal{N}}$ and  $\mathbf{p}\triangleq \{p_{\Ss}[n],p_{\UAV}[n]\}_{n\in\mathcal{N}}$, the secrecy rate maximization (SRM) problem for the secondary system is formulated as follows:
\begin{IEEEeqnarray}{lCl}\label{originalproblem}
\underline{\mathtt{P}:}\hspace{1em}\underset{\mathbf{c},\mathbf{p}}{\maxtt}&& \enspace R_{\SEC}\IEEEyessubnumber\label{originalproblem:a}\\
\st\quad&& \eqref{mobility} \IEEEyessubnumber\label{originalproblem:b},\\
&&\dfrac{1}{N}\sum_{n\in\mathcal{N}}p_{\UAV}\left[n\right] \leq \bar{P_{\UAV}},\IEEEyessubnumber\label{originalproblem:c}\\
&&0 \leq p_{\UAV}\left[n\right] \leq P_{\UAV}^{\max},\ \forall n\in\mathcal{N},\IEEEyessubnumber\label{originalproblem:d}\\
&&\dfrac{1}{N}\sum_{n\in\mathcal{N}}p_{\Ss}\left[n\right] \leq \bar{P_{\Ss}},\IEEEyessubnumber\label{originalproblem:e}\\
&&0 \leq p_{\Ss}\left[n\right] \leq P_{\Ss}^{\max},\ \forall n\in\mathcal{N},\IEEEyessubnumber\label{originalproblem:f}\\
&&\dfrac{1}{N}\sum_{n\in\mathcal{N}}\left(\mathbb{E}_{\hSP}\left\{p_{\Ss}[n]\hSP\right\}+p_{\UAV}[n]\gUP[n]\right)\leq \varepsilon.\qquad\IEEEyessubnumber\label{originalproblem:g}
\end{IEEEeqnarray}
Constraints \eqref{originalproblem:c} and \eqref{originalproblem:d} are the average power and the peak power constraints at  UAV, respectively. The average power and the peak power constraints at ST are stated  by \eqref{originalproblem:e} and \eqref{originalproblem:f}, respectively. Herein, we assume that $\bar{P_{\Ss}}\leq P_{\Ss}^{\max}$ and $\bar{P_{\UAV}}\leq P_{\UAV}^{\max}$. To guarantee the quality of service (QoS) of the primary system, the average power of aggregated interference at PR is limited by a predefined threshold $\varepsilon$ as in \eqref{originalproblem:g}.
It is not difficult to see that the objective function \eqref{originalproblem:a} is nonconcave and constraint \eqref{originalproblem:g} is nonconvex. Strong coupling between the optimization variables makes the problem even more challenging to be tackled. Moreover, the objective function may not be addressed directly due to the expectation of  the average achievable secrecy rate. In what follows, we first transform problem \eqref{originalproblem} into a more tractable form by bypassing the expectation functions with respect to the ground channels. Then, a low-complexity iterative algorithm based on IA framework is developed to solve the problem, which yields at least  a locally optimal solution.

\subsection{Tractable Formulation for \eqref{originalproblem}}
In the PLS, it is important to consider a safe design, taking into account the effects of wireless channels.  To do so, we derive a lower bound of $R_{\Ss}[n]$ and an upper bound of $R_{\E}[n]$ following the similar developments in \cite{6}.

\textit{Lower bound of} $R_{\Ss}[n]$:
Let $X[n]\triangleq\dfrac{p_{\Ss}[n]\hSS}{p_{\UAV}[n]\gUS[n]+\sigma^{2}}.
$
Since $h_{\mathtt{SS}}=\rho_{0}(d_{\mathtt{SS}})^{-\varphi}\psi_{\mathtt{SS}}$, we have
$
	X[n]=\dfrac{p_{\Ss}[n]\rho_{0}(d_{\mathtt{SS}})^{-\varphi}\psi_{\mathtt{SS}}}{p_{\UAV}[n]\gUS[n]+\sigma^{2}}.
$
It is true that $X[n]$ is an exponentially distributed random variable with parameter
$
	\lambda_\Ss[n]=d_{\mathtt{SS}}^{\varphi}/\Bigl(\dfrac{p_{\Ss}[n]\rho_{0}}{p_{\UAV}[n]\gUS[n]+\sigma^{2}}\Bigr).
$
$R_{\Ss}[n]$ in \eqref{15} can be rewritten as
\begin{IEEEeqnarray}{rCl} \label{19_app}
	R_{\Ss}[n]&=&\mathbb{E}_{\hSS}\bigl\{\log_{2}\bigr(1+X[n]\bigl)\bigr\}\nonumber\\
	          &=&\mathbb{E}_{\hSS}\Bigl\{\log_{2}\bigr(1+e^{\ln(X[n])}\bigl)\Bigr\}.
\end{IEEEeqnarray}
Since $\log_{2}\bigr(1+e^{x}\bigl)$ is a convex function \cite{Stephen} and by Jensen's inequality, it follows that
\begin{IEEEeqnarray}{rCl} \label{20_app}
	R_{\Ss}[n]&=&\mathbb{E}_{\hSS}\Bigl\{\log_{2}\bigr(1+e^{\ln(X[n])}\bigl)\Bigr\} \nonumber\\
	          &\geq&\log_{2}\bigr(1+e^{\mathbb{E}_{\hSS}\{\ln(X[n])\}}\bigl),
\end{IEEEeqnarray}
where $\mathbb{E}_{\hSS}\bigl\{\ln\bigr(X[n]\bigl)\bigr\}$ is computed as
\begin{IEEEeqnarray}{rCl} \label{21_app}
	\mathbb{E}_{\hSS}\Bigl\{\ln\bigr(X[n]\bigl)\Bigr\} &=& \int_0^\infty \ln\bigr(X[n]\bigl)\lambda_\Ss[n]e^{-\lambda_\Ss[n]x}dx \nonumber\\
	&=& -\ln\bigr(\lambda_\Ss[n]\bigl) - k,
\end{IEEEeqnarray}
with $k$ being the Euler constant.
Substituting \eqref{21_app} into \eqref{20_app}, we get
\begin{IEEEeqnarray}{rCl} \label{22_app}
R_{\Ss}\left[n\right]\geq R_{\Ss}^{\mathtt{LB}}\left[n\right]\triangleq\log_{2}\bigl(1+\dfrac{e^{-k}\gamma_{0}d_{\ttSS}^{-\varphi}p_{\Ss}\left[n\right]}{\gamma_{0}d^{-2}_{\mathtt{US}}[n]p_{\UAV}[n]+1}\bigr),
\end{IEEEeqnarray}
where $\gamma_{0}\triangleq\rho_0/\sigma^{2}$.

\textit{Upper bound of} $R_{\E}[n]$:
Since $h_{\mathtt{SE}}=\rho_{0}(d_{\mathtt{SE}})^{-\varphi}\psi_{\mathtt{SE}}$, we have
\begin{IEEEeqnarray}{rCl} \label{24_app}
	Y[n]\triangleq\dfrac{p_{\Ss}[n]\hSE}{p_{\UAV}[n]\gUE[n]+\sigma^{2}} = \dfrac{p_{\Ss}[n]\rho_{0}(d_{\mathtt{SE}})^{-\varphi}\psi_{\mathtt{SE}}}{p_{\UAV}[n]\gUE[n]+\sigma^{2}}.
\end{IEEEeqnarray}
Similarly to $X[n]$, $Y[n]$ is also an exponentially distributed random variable with parameter
$
	\lambda_\E[n]=d_{\mathtt{SE}}^{\varphi}/\Bigl(\dfrac{p_{\Ss}[n]\rho_{0}}{p_{\UAV}[n]\gUE[n]+\sigma^{2}}\Bigr).
$
Given that $\log_{2}\bigr(1+y\bigl)$ is a concave function in $y$ \cite{Stephen}, we obtain the following inequality using Jensen's inequality:
\begin{IEEEeqnarray}{rCl} \label{27_app}
	R_{\E}[n]&=&\mathbb{E}_{\hSE}\bigl\{\log_{2}\bigr(1+Y[n]\bigl)\bigr\} \nonumber \\
	&\leq&\log_{2}\bigr(1+\mathbb{E}_{\hSE}\{Y[n]\}\bigl)\nonumber\\
	&=&R_{\E}^{\mathtt{UB}}[n]\triangleq\log_{2}\Bigl(1+\dfrac{\gamma_{0}d_{\SE}^{-\varphi}p_{\Ss}\left[n\right]}{\gamma_{0}d^{-2}_{\mathtt{UE}}[n]p_{\UAV}[n]+1}\Bigr),
\end{IEEEeqnarray}
where $	\mathbb{E}_{\hSE}\bigl\{\ln(Y[n])\bigr\} = 1/\lambda_\E[n].$

In addition, from the fact that $\mathbb{E}_{\hSP}\left\{p_{\Ss}[n]\hSP\right\}=\rho_0 d_{\SP}^{-\varphi}p_{\Ss}[n]$, constraint \eqref{originalproblem:g} can be further simplified as
	\begin{equation}\label{interpowerPR}
	\dfrac{1}{N}\sum_{n\in\mathcal{N}}\left(\rho_0 d_{\SP}^{-\varphi}p_{\Ss}[n]+ \rho_0d^{-2}_{\UP}[n]p_{\UAV}[n]\right)\leq \varepsilon.
	\end{equation}

Simply put, we consider the following safe optimization
problem:
\begin{IEEEeqnarray}{rcll}\label{problemLB}
	\underline{\mathtt{P}^{\mathtt{Safe}}:} &\quad \underset{\mathbf{c},\mathbf{p}}{\maxtt}&& \quad R_{\SEC}^{\mathtt{LB}}\triangleq \dfrac{1}{N}\sum_{n\in\mathcal{N}}\Bigl(R_{\Ss}^{\mathtt{LB}}[n]- R_{\E}^{\mathtt{UB}}[n]\Bigr)\IEEEyessubnumber\label{problemLB:a}\quad\\
	&\quad\st&& \quad \eqref{mobility}, \eqref{originalproblem:c}-\eqref{originalproblem:f}, \eqref{interpowerPR}, \IEEEyessubnumber\label{problemLB:b}
\end{IEEEeqnarray}
where the operation $[x]^+$ is ignored since it does not affect the optimal solution. If the objective function is less than zero for any time slot, ST can reduce its transmit power of ST to zero while satisfying constraint \eqref{interpowerPR}.
\begin{remark}
Note that problem \eqref{problemLB} is considered as a safe design in the sense that its solution is always feasible to problem \eqref{originalproblem} but not vice versa due to the inequalities in \eqref{22_app} and \eqref{27_app}, i.e., $R_{\SEC} \geq R_{\SEC}^{\mathtt{LB}}$. In the rest of this paper, we will consider the safe optimization problem \eqref{problemLB} to provide a lower bound of the average secrecy rate  rather than the actual secrecy rate in \eqref{originalproblem}.
\end{remark}

\subsection{Proposed Iterative Algorithm for Solving \eqref{problemLB}}
We are now ready to apply  IA method \cite{Marks:78} to approximate the nonconvex problem \eqref{problemLB}. Before proceeding further, we first introduce new optimization variables $\mathbf{r}\triangleq\{r_{\Ss}[n],r_{\E}[n]\}_{n\in\mathcal{N}}$ to rewrite \eqref{problemLB} equivalently as
\begin{IEEEeqnarray}{rCll}\label{problemLBequi}
	\underline{\mathtt{P}^{\mathtt{Safe}}_{\mathtt{Equi}}:} &\quad \underset{\mathbf{c},\mathbf{p},\mathbf{r}}{\maxtt}&& \quad R_{\SEC}^{\mathtt{LB}}\triangleq \dfrac{1}{N}\sum_{n\in\mathcal{N}}\Bigl(r_{\Ss}[n]- r_{\E}[n]\Bigr)\IEEEyessubnumber\label{problemLBequi:a}\quad\\
	&\quad\st&& \quad\eqref{mobility}, \eqref{originalproblem:c}-\eqref{originalproblem:f}, \eqref{interpowerPR} \IEEEyessubnumber\label{problemLBequi:b},\\
	& && \quad R_{\Ss}^{\mathtt{LB}}[n] \geq r_{\Ss}[n],\ \forall n\in\mathcal{N},  \IEEEyessubnumber\label{problemLBequi:c}\\
	& &&\quad  R_{\E}^{\mathtt{UB}}[n] \leq r_{\E}[n],\ \forall n\in\mathcal{N}, \IEEEyessubnumber\label{problemLBequi:d}
\end{IEEEeqnarray}
It can be readily seen that the objective \eqref{problemLBequi:a} is a linear function of $\mathbf{r}$. In problem \eqref{problemLBequi}, nonconvex parts include  \eqref{interpowerPR}, \eqref{problemLBequi:c} and \eqref{problemLBequi:d}.

\indent\textit{\underline{Convexity of \eqref{problemLBequi:c}:}}
By introducing slack variables $z_{\Ss}\left[n\right]$ and $t_{\Ss}\left[n\right]$,   \eqref{problemLBequi:c} is  expressed as
\begin{subnumcases}{\label{problemLBequi:cequi} \eqref{problemLBequi:c}\Leftrightarrow}
R_{\Ss}^{\mathtt{LB}}[n]\geq \log_2(1 + t_{\Ss}[n])\geq r_{\Ss}[n], & \IEEEyessubnumber\label{problemLBequi:c1}\\
\dfrac{e^{-k}\gamma_{0}d_{\ttSS}^{-\varphi}p_{\Ss}\left[n\right]}{\gamma_{0}\alpha^{-1}_{\Ss}\left[n\right]p_{\UAV}[n]+1} \geq t_{\Ss}[n],& \IEEEyessubnumber\label{problemLBequi:c2}\\
\alpha_{\Ss}\left[n\right]\leq f_{\mathtt{d}}(\mathbf{c}_{\mathtt{S}},\mathbf{c}_{\mathtt{U}}[n])
.&\IEEEyessubnumber\label{problemLBequi:c3}
\end{subnumcases}
We note that constraints \eqref{problemLBequi:c1}-\eqref{problemLBequi:c3} will hold with equality at optimum, leading to an equivalence between \eqref{problemLBequi:c} and \eqref{problemLBequi:cequi}. To avoid the implementation complexity of log function, 
we apply the first-order approximation to approximate the concave function $\log_2(1 + t_{\Ss}[n])$ around the point $t_{\Ss}^{(i)}[n]$ \cite[Eq. (66)]{Nguyen:JSAC:17}, and thus \eqref{problemLBequi:c1} is iteratively approximated as
\begin{equation}\label{problemLBequi:c1:convex}
R_{\Ss}^{(i)}[n] \triangleq a(t_{\Ss}^{(i)}[n])- b(t_{\Ss}^{(i)}[n])\frac{1}{t_{\Ss}[n]}\geq r_{\Ss}[n],\ \forall n\in\mathcal{N},
\end{equation}
where $a(t_{\Ss}^{(i)}[n])\triangleq \log_2(1+t_{\Ss}^{(i)}[n]) + \log_2(e)\frac{t_{\Ss}^{(i)}[n]}{t_{\Ss}^{(i)}[n]+1}$ and $b(t_{\Ss}^{(i)}[n])\triangleq \log_2(e)\frac{(t_{\Ss}^{(i)}[n])^2}{t_{\Ss}^{(i)}[n]+1}$. Next, we rewrite \eqref{problemLBequi:c2} as
\begin{align}\label{problemLBequi:c2equi}
t_{\Ss}\left[n\right]\left(\gamma_{0}p_{\UAV}\left[n\right]+\alpha_\Ss\left[n\right]\right)\leq e^{-k}\gamma_{0}d_{\ttSS}^{-\varphi}p_{\Ss}\left[n\right]\alpha_\Ss\left[n\right],
\end{align}
and then apply the following inequality \cite{NguyenJSAC18}:
\begin{equation}
xy \leq 0.5\Bigr(\frac{y^{(i)}}{x^{(i)}}x^2+\frac{x^{(i)}}{y^{(i)}}y^2\Bigl),\ \text{for} \ x , y \in\mathbb{R}_+, x^{(i)}, y^{(i)} > 0\nonumber,
\end{equation}
to convexify \eqref{problemLBequi:c2equi} as 
\begin{align}\label{problemLBequi:c2:convex}
&\dfrac{1}{2}\dfrac{t_\Ss^{\left(i\right)}\left[n\right]}{\gamma_{0}p_\UAV^{\left(i\right)}\left[n\right]+\alpha_\Ss^{\left(i\right)}\left[n\right]}\left(\gamma_{0}p_\UAV\left[n\right]+\alpha_\Ss\left[n\right]\right)^{2}+  \nonumber\\
&\dfrac{1}{2}\dfrac{\gamma_{0}p_\UAV^{\left(i\right)}\left[n\right]+\alpha_\Ss^{\left(i\right)}\left[n\right]}{t_\Ss^{\left(i\right)}\left[n\right]}t^2_\Ss\left[n\right]+ \dfrac{e^{-k}\gamma_{0}d_{\ttSS}^{-\varphi}}{4}\left(p_{\Ss}\left[n\right]-\alpha_\Ss\left[n\right]\right)^{2}\nonumber\\
&\leq\dfrac{e^{-k}\gamma_{0}d_{\ttSS}^{-\varphi}}{4}\left(\left(p_{\Ss}\left[n\right]+\alpha_\Ss\left[n\right]\right)^{2}\right),\ \forall n\in\mathcal{N}.  
\end{align}
For  constraint \eqref{problemLBequi:c3}, we note that its right-hand side (RHS) is a quadratic convex function which is useful to apply the first-order approximation. Hence, \eqref{problemLBequi:c3} can be iteratively replaced by the following linear constraint:
\begin{align}\label{problemLBequi:c3:convex}
\alpha_\Ss\left[n\right] &\leq  f^{(i)}_{\mathtt{d}}(\mathbf{c}_{\mathtt{U}}[n]|\mathbf{c}_{\mathtt{S}},\mathbf{c}_{\mathtt{U}}^{(i)}[n]),\ \forall n\in\mathcal{N}, 
\end{align}
where $f^{(i)}_{\mathtt{d}}(\mathbf{c}_{\mathtt{U}}[n]|\mathbf{c}_{\mathtt{S}},\mathbf{c}_{\mathtt{U}}^{(i)}[n])$ is the first-order approximation of $f_{\mathtt{d}}(\mathbf{c}_{\mathtt{S}},\mathbf{c}_{\mathtt{U}}[n])$ around the point $\mathbf{c}_{\mathtt{U}}^{(i)}[n]$, which is defined in \eqref{first_order}.
It can be seen that \eqref{problemLBequi:c1:convex}, \eqref{problemLBequi:c2:convex} and \eqref{problemLBequi:c3:convex} are  convex quadratic and linear constraints \cite{Stephen}. 

\indent\textit{\underline{Convexity of \eqref{problemLBequi:d}:}} For new slack variables $t_\E[n], \alpha_\E[n]$ and $\beta[n]$, constraint \eqref{problemLBequi:d} can be rewritten equivalently as
\begin{subnumcases}{\label{problemLBequi:dequi} \eqref{problemLBequi:d}\Leftrightarrow}
R_{\E}^{\mathtt{UB}}[n]\leq \log_2(1 + t_{\E}[n])\leq r_{\E}[n], & \IEEEyessubnumber\label{problemLBequi:d1}\\
\dfrac{\gamma_{0}d_{\SE}^{-\varphi}p_{\Ss}\left[n\right]}{\beta[n]+1} \leq t_{\E}[n],& \IEEEyessubnumber\label{problemLBequi:d2}\\
\beta\left[n\right] \leq \dfrac{\gamma_{0}p_{\UAV}\left[n\right]}{\alpha_\E\left[n\right]},& \IEEEyessubnumber\label{problemLBequi:d3}\\
f_{\mathtt{d}}(\mathbf{c}_{\mathtt{E}},\mathbf{c}_{\mathtt{U}}[n])  
\leq  \alpha_{\E}\left[n\right]. \IEEEyessubnumber\label{problemLBequi:d4}
\end{subnumcases}
In \eqref{problemLBequi:dequi}, except for \eqref{problemLBequi:d4}, other constraints still remain nonconvex.
Since  $\log_2(1 + t_{\E}[n])$ is a concave function, \eqref{problemLBequi:d1} is iteratively replaced by 
\begin{IEEEeqnarray}{rCl}
	R_{\E}^{(i)}[n]&\triangleq& \log_2(1 + t^{(i)}_{\E}[n]) 
	+ \frac{\log_2(e)(t_{\E}[n]-t^{(i)}_{\E}[n])}{1 + t^{(i)}_{\E}[n]}\nonumber\\
	&\leq& r_{\E}[n],\ \forall n\in\mathcal{N},\label{problemLBequi:d1convex}
\end{IEEEeqnarray}
which is a linear constraint.
Similarly to \eqref{problemLBequi:c2:convex},  constraint \eqref{problemLBequi:d2} is approximated around the feasible point $(p^{(i)}_\Ss[n], \beta^{(i)}[n])$ as
\begin{align}\nonumber
\frac{\gamma_{0}d_{\SE}^{-\varphi}}{2}\Bigl( \frac{p^2_{\Ss}[n]}{p^{(i)}_\Ss[n] (\beta^{(i)}[n]+1)}  + \frac{p^{(i)}_\Ss[n] (\beta^{(i)}[n]+1)}{(\beta[n]+1)^2} \Bigr)\leq t_{\E}[n],
\end{align}
which can be cast to the following convex constraint:
\begin{align}\label{problemLBequi:d2convex}
\frac{\gamma_{0}d_{\SE}^{-\varphi}}{2}\Bigl( \frac{p^2_{\Ss}[n]}{p^{(i)}_\Ss[n] (\beta^{(i)}[n]+1)} && \nonumber\\
+ \frac{p^{(i)}_\Ss[n]}{2\beta[n]-\beta^{(i)}[n]+1} \Bigr)\leq&& t_{\E}[n],\ \forall n\in\mathcal{N}.
\end{align}
In \eqref{problemLBequi:d2convex}, the lower bound of $(\beta[n]+1)^2$ is given as $(\beta^{(i)}[n]+1)(2\beta[n]-\beta^{(i)}[n]+1)$ over the trust region $2\beta[n]-\beta^{(i)}[n]+1 > 0$. Constraint \eqref{problemLBequi:d3} is rewritten as $ \alpha_\E\left[n\right]\beta\left[n\right]\leq \gamma_{0}p_{\UAV}\left[n\right]$ and in the same manner as \eqref{problemLBequi:c2:convex}, we have 
\begin{align}\label{problemLBequi:d3convex}
&\dfrac{1}{2}\Bigl(\dfrac{\beta^{\left(i\right)}\left[n\right]}{\alpha_\E^{\left(i\right)}\left[n\right]}\alpha_\E^{2}\left[n\right]+\dfrac{\alpha_\E^{\left(i\right)}\left[n\right]}{\beta^{\left(i\right)}\left[n\right]}\beta^{2}\left[n\right]\Bigr) \leq \gamma_{0}p_{\UAV}\left[n\right],\ \forall n\in\mathcal{N}.
\end{align}

\indent\textit{\underline{Convexity of \eqref{interpowerPR}:}} We first reformulate \eqref{interpowerPR} as
\begin{subnumcases}{\label{interpowerPR:equi} \eqref{interpowerPR}\Leftrightarrow}
\dfrac{1}{N}\sum_{n\in\mathcal{N}}\left(\rho_0 d_{\SP}^{-\varphi}p_{\Ss}[n]+ \rho_0 \frac{p_{\UAV}[n]}{\alpha_\Pp[n]}\right)\leq \varepsilon, & \IEEEyessubnumber\label{interpowerPR:equi:a}\\
\alpha_{\Pp}\left[n\right]\leq f_{\mathtt{d}}(\mathbf{c}_{\mathtt{P}},\mathbf{c}_{\mathtt{U}}[n])\IEEEyessubnumber\label{interpowerPR:equi:b}
\end{subnumcases}
where $\alpha_\Pp[n],\forall n$ are  slack variables. Similarly to \eqref{problemLBequi:d2convex}, constraint \eqref{interpowerPR:equi:a} is iteratively approximated as
\begin{align}\label{interpowerPR:equi:aconvex}
&&\dfrac{1}{N}\sum_{n\in\mathcal{N}}\Biggl(\rho_0 d_{\SP}^{-\varphi}p_{\Ss}[n]+ \frac{\rho_0}{2}\Bigl[\frac{p^2_{\UAV}[n]}{p^{(i)}_{\UAV}[n]\alpha^{(i)}_\Pp[n]}\nonumber\\ 
&&\qquad\qquad\qquad +  \frac{p^{(i)}_{\UAV}[n]}{2\alpha_\Pp[n]-\alpha_\Pp^{(i)}[n]}\Bigr]\Biggr)\leq \varepsilon.
\end{align}
For a given point $ \mathbf{a}=(x_a,y_a,z_a)\in\mathcal{T} $ and  optimization variable $\mathbf{b}=(x_b,y_b,z_b)\in\mathcal{T}$, constraint \eqref{interpowerPR:equi:b} is innerly approximated as
\begin{align}
\alpha_\Pp\left[n\right] &\leq f_{\mathtt{d}}(\mathbf{c}_{\mathtt{P}},\mathbf{c}_{\mathtt{U}}^{(i)}[n]) + f_{\mathtt{g}}(\mathbf{c}_{\mathtt{U}}[n]|\mathbf{c}_{\mathtt{P}},\mathbf{c}_{\mathtt{U}}^{(i)}[n]) \nonumber \\
&\triangleq  f^{(i)}_{\mathtt{d}}(\mathbf{c}_{\mathtt{U}}[n]|\mathbf{c}_{\mathtt{P}},\mathbf{c}_{\mathtt{U}}^{(i)}[n]) \label{32},\ \forall n\in\mathcal{N},
\end{align}
where 
\begin{align} \label{first_order}
f_{\mathtt{g}}\bigl(\mathbf{b}|\mathbf{a},\mathbf{b}^{(i)}\bigr) & \triangleq \left<\nabla f_{\mathtt{d}}(\mathbf{a},\mathbf{b}), \mathbf{b}-\mathbf{b}^{(i)}\right> \nonumber \\
& = \begin{bmatrix}
\nabla_{x} f_{\mathtt{d}}(\mathbf{a},\mathbf{b}) \\
\nabla_{y} f_{\mathtt{d}}(\mathbf{a},\mathbf{b}) \\
\nabla_{z} f_{\mathtt{d}}(\mathbf{a},\mathbf{b})
\end{bmatrix}^T \begin{bmatrix}
x_b-x_b^{(i)} \\
y_b-y_b^{(i)} \\
z_b-z_b^{(i)}
\end{bmatrix} \nonumber \\
& = 2(x_b^{(i)}-x_a)(x_b-x_b^{(i)}) \nonumber \\
&\quad  + 2(y_b^{(i)}-y_a)(y_b-y_b^{(i)}) \nonumber \\
&\quad  + 2(z_b^{(i)}-z_a)(z_b-z_b^{(i)}), 
\end{align}
with $ \nabla_{x} f_{\mathtt{d}}(\mathbf{a},\mathbf{b}) $, $ \nabla_{y} f_{\mathtt{d}}(\mathbf{a},\mathbf{b}) $, and $ \nabla_{z} f_{\mathtt{d}}(\mathbf{a},\mathbf{b}) $ being the  gradients of $ f_{\mathtt{d}}(\mathbf{a},\mathbf{b}) $ with respect to $ x_b $, $ y_b $, and $ z_b $, respectively. In other words, $f^{(i)}_{\mathtt{d}}(\mathbf{c}_{\mathtt{U}}[n]|\mathbf{c}_{\mathtt{P}},\mathbf{c}_{\mathtt{U}}^{(i)}[n])$ is the first-order approximation of $f_{\mathtt{d}}(\mathbf{c}_{\mathtt{P}},\mathbf{c}_{\mathtt{U}}[n])$ around the point $\mathbf{c}_{\mathtt{U}}^{(i)}[n]$.

Bearing all the above developments in mind, the successive convex  program solved at iteration $i$ is given as
\begin{IEEEeqnarray}{rCll}\label{problemLBequi:convex}
	\underline{\mathtt{P}^{\mathtt{Safe}}_{\mathtt{Convex}}:} &\quad \underset{\substack{\mathbf{c},\mathbf{p},\mathbf{r}\\ \mathbf{t},\boldsymbol{\alpha},\boldsymbol{\beta}}}{\maxtt}&& \enspace R_{\SEC}^{\mathtt{LB},(i)}\triangleq \dfrac{1}{N}\sum_{n\in\mathcal{N}}\Bigl(r_{\Ss}[n]- r_{\E}[n]\Bigr)\IEEEyessubnumber\label{problemLBequi:a:convex}\quad\;\\
	&\quad\st&& \ \eqref{mobility}, \eqref{originalproblem:c}-\eqref{originalproblem:f}, \eqref{interpowerPR:equi:aconvex}, \eqref{32}, \eqref{problemLBequi:c1:convex},  \nonumber\\
	& && \ \eqref{problemLBequi:c2:convex}, \eqref{problemLBequi:c3:convex},\eqref{problemLBequi:d4}, \eqref{problemLBequi:d1convex}, \eqref{problemLBequi:d2convex}, \eqref{problemLBequi:d3convex}, \quad \IEEEyessubnumber\label{problemLBequi:b:convex}
\end{IEEEeqnarray}
where $ \mathbf{t}\triangleq\{t_\Ss[n], t_\E[n]\}_{n\in\mathcal{N}}$, $ \boldsymbol{\alpha}\triangleq\{\alpha_\Pp[n], \alpha_\Ss[n], \alpha_\E[n]\}_{n\in\mathcal{N}} $, and $\boldsymbol{\beta}\triangleq\{\beta[n]\}_{n\in\mathcal{N}}$. Let  $\boldsymbol{\Psi}\triangleq\{\mathbf{c},\mathbf{p},\mathbf{r},\mathbf{t},\boldsymbol{\alpha},\boldsymbol{\beta}\}$ and $\boldsymbol{\Psi}^{(i)}\triangleq\{\mathbf{c}^{(i)},\mathbf{p}^{(i)},\mathbf{r}^{(i)},\mathbf{t}^{(i)},\boldsymbol{\alpha}^{(i)},\boldsymbol{\beta}^{(i)}\}$ be the sets of optimization variables and parameters that need to be updated  at iteration $ i $. To ensure that the approximate convex program \eqref{problemLBequi:convex} can be successfully solve at the first iteration, a feasible starting point $\boldsymbol{\Psi}^{(0)}$ must be initialized. We then find the optimal solution of \eqref{problemLB} by successively solving \eqref{problemLBequi:convex} and updating  involved variables until meeting the convergence criterion. In summary, a pseudo-code for solving \eqref{problemLB} is given in Algorithm \ref{proposedAlg1}.


\begin{algorithm}
	\begin{algorithmic}[1]
		\protect\caption{Proposed Algorithm for Solving \eqref{problemLB}}
		\label{proposedAlg1}
		\STATE \textbf{Initialization:} Set $i:=0$ and generate an initial feasible point $\boldsymbol{\Psi}^{(0)}$ satisfying \eqref{problemLBequi:b:convex}. 
		\REPEAT
		\STATE {Set $i:=i+1$}; 
		\STATE {Find the optimal solution $\boldsymbol{\Psi}^{(*)}$ by solving \eqref{problemLBequi:convex}};
		\STATE {Update $\boldsymbol{\Psi}^{(i)}:= \boldsymbol{\Psi}^{(*)}$};
		\UNTIL {$\dfrac{R_{\SEC}^{\mathtt{LB},(i)}-R_{\SEC}^{\mathtt{LB},(i-1)}}{R_{\SEC}^{\mathtt{LB},(i-1)}} \leq \epsilon_{\mathtt{tol}}$}.		
	\end{algorithmic} 
\end{algorithm}

{\noindent \textit{\underline{Complexity Analysis:}} The optimization problem \eqref{problemLBequi:convex} has $13N$ real variables and $16N$ constraints. The per-iteration complexity of Algorithm 1 required
to solve \eqref{problemLBequi:convex} is thus $\mathcal{O}((16N)^{2.5}(13N)^{2}+(16N)^{3.5}).$}

\section{Extension to the Case of Imperfect Location Information of Eavesdropper}

\begin{figure}[t]
	\begin{center}
		\includegraphics[width=0.8\columnwidth] {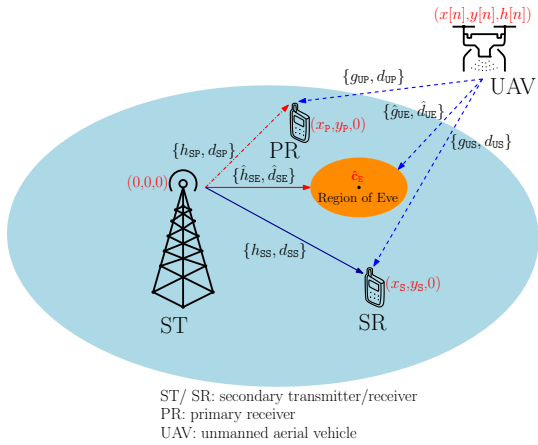}
	\end{center}
	\caption{Illustration of a CRN with a UAV-aided JN and an external Eve under  imperfect location information of Eve.\label{fig:im_systemmodel}}
\end{figure}
In practice, perfect information on the location of Eve may be difficult to obtain in some cases. For example, Eve can move
to new locations (e.g., closer to the ST) to overhear confidential messages from ST more effectively. As a result, the location 
of Eve may change, and thus, it can only be estimated by ST and UAV  based on its last known
location. Moreover, the active region of Eve may be restricted, and Eve may not be allowed to move inside the safe zone of the ST. 

 As illustrated in Fig. \ref{fig:im_systemmodel}, the location estimate for Eve can be expressed as $\mathbf{\hat{c}}_{\mathtt{E}}=(\hat{x}_{\mathtt{E}},\hat{y}_{\mathtt{E}},\hat{z}_{\mathtt{E}}=0)$. We consider the same optimization problem setup as in Section III, with
the additional assumption that the location information of Eve is imperfect. To put it into context, let
\begin{IEEEeqnarray}{rCl}\label{imperfect}
	x_{\mathtt{E}} &=& \hat{x}_{\mathtt{E}} + \Delta x_{\mathtt{E}},\IEEEyessubnumber\label{17}\\
	y_{\mathtt{E}} &=& \hat{y}_{\mathtt{E}} + \Delta y_{\mathtt{E}},\IEEEyessubnumber\label{18}\\
	z_{\mathtt{E}} &=& \hat{z}_{\mathtt{E}}=0,\IEEEyessubnumber\label{18-z}
\end{IEEEeqnarray} 
where $\Delta x_{\mathtt{E}}$ and $\Delta y_{\mathtt{E}}$ represent the associated estimation errors of $x_{\mathtt{E}}$ and $y_{\mathtt{E}}$, respectively. We should note that  the transmitters are only aware of $\mathbf{\hat{c}}_{\mathtt{E}}$, while the estimation errors $\Delta x_{\mathtt{E}}$ and $\Delta y_{\mathtt{E}}$ are assumed to be deterministic and bounded, satisfying the following condition \cite{ImperfectLocation, Q_Li}:
\begin{IEEEeqnarray}{rCl}
	(\Delta x_{\mathtt{E}},\Delta y_{\mathtt{E}}) \in \Xi \triangleq \{(\Delta x_{\mathtt{E}},\Delta y_{\mathtt{E}})|\Delta x_{\mathtt{E}}^{2} + \Delta y_{\mathtt{E}}^{2} \leq Q^{2}\}, \label{19}\quad
\end{IEEEeqnarray}
where $Q>0$ is the maximum distance between the estimate and exact location of Eve.

\subsection{Worst-Case Optimization Problem Formulation}
Toward a safe design, the worst-case secrecy rate is considered. We first introduce the following lemma:
\begin{lemma} \label{lemma 4}
	Consider that the location estimation error of Eve  is  deterministic and bounded: $ (\Delta x_{\E},\Delta y_{\E}) \in \Xi $. By utilizing the tractable form in Section \ref{Alg Under Perfect}-B, we formulate the worst-case secrecy rate of the secondary system \cite{Q_Li} at time slot $n$ as 
	\begin{equation}\label{worst-case}
		\bar{R}_{\mathtt{sec}}[n] = R_{\mathtt{S}}^{\mathtt{LB}}\left[n\right] - \underset{(\Delta x_{\E},\Delta y_{\E}) \in \Xi}{\maxtt} \enspace R_{\mathtt{E}}^{\mathtt{UB}}(\hat{d}_{\mathtt{SE}},\hat{d}_{\mathtt{UE}}\left[n\right]),
	\end{equation}
	where $ R_{\mathtt{S}}^{\mathtt{LB}}\left[n\right] $ is given in \eqref{22_app}; $ R_{\mathtt{E}}^{\mathtt{UB}}(\hat{d}_{\mathtt{SE}},\hat{d}_{\mathtt{UE}}\left[n\right]) $ is  a function of $ (\hat{d}_{\mathtt{SE}},\hat{d}_{\mathtt{UE}}\left[n\right]) $, instead of $ (d_{\mathtt{SE}},d_{\mathtt{UE}}\left[n\right]) $ in \eqref{27_app}. For a tractable form, the worst-case secrecy rate in \eqref{worst-case} is further transformed  into a ``strict''  worst-case secrecy rate:
	\begin{align} \label{strict worst-case}
		\hat{R}_{\mathtt{sec}}[n] = R_{\mathtt{S}}^{\mathtt{LB}}\left[n\right] - \underset{(\Delta x_{\E},\Delta y_{\E}) \in \Xi}{\maxtt} \enspace \hat{R}_{\mathtt{E}}^{\mathtt{UB}}\left[n\right],
	\end{align}
	where \[ \hat{R}_{\mathtt{E}}^{\mathtt{UB}}\left[n\right]\triangleq R_{\mathtt{E}}^{\mathtt{UB}}[n]\bigl(\ddot{d}_{\mathtt{SE}},\hat{d}_{\mathtt{UE}}[n]\bigr)= \sup_{\hat{d}_{\mathtt{SE}} \in \mathcal{D}} R_{\mathtt{E}}^{\mathtt{UB}}[n]\bigl(\hat{d}_{\mathtt{SE}},\hat{d}_{\mathtt{UE}}[n]\bigr), \] with $\mathcal{D}$ being the set of  distances from ST to Eve. As illustrated in Fig. \ref{fig:intersection}, a fixed distance $ \ddot{d}_{\mathtt{SE}} $ is determined by $ \ddot{d}_{\mathtt{SE}}=f_{\mathtt{d}}(\mathbf{\ddot{c}}_{\mathtt{E}},\mathbf{c}_{\mathtt{ST}}) $, where $ \mathbf{\ddot{c}}_{\mathtt{E}} $ is  the nearest geometric point such that $ \mathbf{\ddot{c}}_{\mathtt{E}}\in\{\mathbf{\hat{c}}_{\mathtt{E}}+(\Delta x_{\E}, \Delta y_{\E}, 0)|(\Delta x_{\E}, \Delta y_{\E})\in\Xi\} $.
\end{lemma}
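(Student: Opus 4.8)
\noindent\emph{Proof plan.} The statement is essentially a derivation of the two robust objectives \eqref{worst-case} and \eqref{strict worst-case} together with the geometric identification of the worst ST--Eve distance $\ddot{d}_{\mathtt{SE}}$, so I would organise the argument into three steps: (i)~deduce \eqref{worst-case} from the safe bounds of Section~\ref{Alg Under Perfect}-B; (ii)~decouple the two distance arguments of $R_{\E}^{\mathtt{UB}}[n]$ to reach the ``strict'' form \eqref{strict worst-case}; and (iii)~evaluate the inner supremum over $\mathcal{D}$ to pin down $\ddot{d}_{\mathtt{SE}}$. For step~(i): dropping the $[\cdot]^+$ operator as in \eqref{problemLB}, the per-slot secrecy rate is $R_{\Ss}[n]-R_{\E}[n]$, and only $R_{\E}[n]$ depends on the unknown displacement $(\Delta x_{\E},\Delta y_{\E})\in\Xi$ (through $d_{\mathtt{SE}}$ and $d_{\mathtt{UE}}[n]$), so the robust per-slot rate is $R_{\Ss}[n]-\underset{(\Delta x_{\E},\Delta y_{\E})\in\Xi}{\maxtt}R_{\E}[n]$. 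Replacing $R_{\Ss}[n]$ by its lower bound $R_{\Ss}^{\mathtt{LB}}[n]$ from \eqref{22_app} and $R_{\E}[n]$ by its upper bound from \eqref{27_app}, now written $R_{\E}^{\mathtt{UB}}(\hat{d}_{\mathtt{SE}},\hat{d}_{\mathtt{UE}}[n])$ to make explicit that the perturbed distances have taken the place of the true ones, and using $R_{\Ss}^{\mathtt{LB}}[n]\le R_{\Ss}[n]$ and $R_{\E}^{\mathtt{UB}}[n]\ge R_{\E}[n]$, yields \eqref{worst-case} as a conservative lower bound on the actual robust secrecy rate.

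\emph{Step (ii).} The obstacle with \eqref{worst-case} is that $\hat{d}_{\mathtt{SE}}$ and $\hat{d}_{\mathtt{UE}}[n]$ are generated by one and the same perturbation, so the inner maximisation over the disk $\Xi$ does not separate per distance. I would break the coupling by letting $\hat{d}_{\mathtt{SE}}$ range freely over the set $\mathcal{D}$ of admissible ST--Eve distances while keeping $\hat{d}_{\mathtt{UE}}[n]$ tied to the perturbation, i.e.\ by upper-bounding $R_{\E}^{\mathtt{UB}}(\hat{d}_{\mathtt{SE}},\hat{d}_{\mathtt{UE}}[n])$ by $\hat{R}_{\E}^{\mathtt{UB}}[n]=\sup_{\hat{d}_{\mathtt{SE}}\in\mathcal{D}}R_{\E}^{\mathtt{UB}}[n](\hat{d}_{\mathtt{SE}},\hat{d}_{\mathtt{UE}}[n])$. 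Since the true $\hat{d}_{\mathtt{SE}}$ belongs to $\mathcal{D}$, we have $\hat{R}_{\E}^{\mathtt{UB}}[n]\ge R_{\E}^{\mathtt{UB}}(\hat{d}_{\mathtt{SE}},\hat{d}_{\mathtt{UE}}[n])$ and hence $\hat{R}_{\mathtt{sec}}[n]\le\bar{R}_{\mathtt{sec}}[n]$; thus \eqref{strict worst-case} is still a safe under-estimate of the secrecy rate, but now decoupled in the two distances. This one-sided tightening is exactly what makes the worst case ``strict''.

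\emph{Step (iii).} From the closed form in \eqref{27_app}, $R_{\E}^{\mathtt{UB}}[n](d_{\mathtt{SE}},d_{\mathtt{UE}}[n])=\log_2\bigl(1+\gamma_0 d_{\mathtt{SE}}^{-\varphi}p_{\Ss}[n]/(\gamma_0 d_{\mathtt{UE}}^{-2}[n]p_{\UAV}[n]+1)\bigr)$ is non-increasing in $d_{\mathtt{SE}}$ (with $\varphi>0$, $\gamma_0>0$, $p_{\Ss}[n]\ge0$: the numerator decreases in $d_{\mathtt{SE}}$ while the denominator does not depend on it), so the supremum over $\mathcal{D}$ is attained at its smallest element $\ddot{d}_{\mathtt{SE}}\triangleq\min\mathcal{D}$. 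Because $\mathcal{D}=\{f_{\mathtt{d}}(\mathbf{\hat{c}}_{\E}+(\Delta x_{\E},\Delta y_{\E},0),\mathbf{c}_{\mathtt{ST}}):(\Delta x_{\E},\Delta y_{\E})\in\Xi\}$ is the image of the compact connected disk $\Xi$ under the continuous squared-distance map, it is a closed interval, and $\min\mathcal{D}$ equals the squared distance from $\mathbf{c}_{\mathtt{ST}}$ to the closed disk $\{\mathbf{\hat{c}}_{\E}+(\Delta x_{\E},\Delta y_{\E},0):\Delta x_{\E}^2+\Delta y_{\E}^2\le Q^2\}$, attained at the unique Euclidean projection $\mathbf{\ddot{c}}_{\E}$ of $\mathbf{c}_{\mathtt{ST}}$ onto that disk --- the point on the segment from $\mathbf{\hat{c}}_{\E}$ towards $\mathbf{c}_{\mathtt{ST}}$ at distance $\max\{0,\|\mathbf{\hat{c}}_{\E}\|-Q\}$ from $\mathbf{c}_{\mathtt{ST}}$ (see Fig.~\ref{fig:intersection}). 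This gives $\ddot{d}_{\mathtt{SE}}=f_{\mathtt{d}}(\mathbf{\ddot{c}}_{\E},\mathbf{c}_{\mathtt{ST}})$ and $\hat{R}_{\E}^{\mathtt{UB}}[n]=R_{\E}^{\mathtt{UB}}[n](\ddot{d}_{\mathtt{SE}},\hat{d}_{\mathtt{UE}}[n])$, which is precisely \eqref{strict worst-case}.

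\emph{Main obstacle.} The delicate point is step~(ii): one has to recognise that \eqref{strict worst-case} is only a one-sided (conservative) relaxation of \eqref{worst-case}, not an equivalent reformulation, and argue that such a safe under-estimate is all that is needed for the robust design. Even after this relaxation decouples $d_{\mathtt{SE}}$ from $d_{\mathtt{UE}}[n]$, the residual maximisation over $\hat{d}_{\mathtt{UE}}[n]$ --- equivalently over $(\Delta x_{\E},\Delta y_{\E})\in\Xi$, with the UAV position $\mathbf{c}_{\UAV}[n]$ still a decision variable --- remains nonconvex; it is dealt with afterwards by combining the IA framework with the $S$-procedure, as anticipated in the introduction.
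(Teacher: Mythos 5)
Your proposal is correct and follows essentially the same route as the paper's Appendix A proof: replace $R_{\Ss}[n]$ and $R_{\E}[n]$ by their safe bounds, decouple $\hat{d}_{\mathtt{SE}}$ from $\hat{d}_{\mathtt{UE}}[n]$ by taking the supremum of $R_{\E}^{\mathtt{UB}}$ over $\mathcal{D}$ (explicitly acknowledged, as you do, to be a one-sided conservative relaxation rather than an equivalence, since both distances depend on the same Eve location), and use monotonicity of $R_{\E}^{\mathtt{UB}}$ in $\hat{d}_{\mathtt{SE}}$ to identify $\ddot{d}_{\mathtt{SE}}$ with the distance from ST to the nearest point of the uncertainty disk. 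Your added details on attainment of the minimum and the explicit projection formula only strengthen the same argument.
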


\begin{figure}[t]
	\begin{center}
		\includegraphics[width=1\columnwidth] {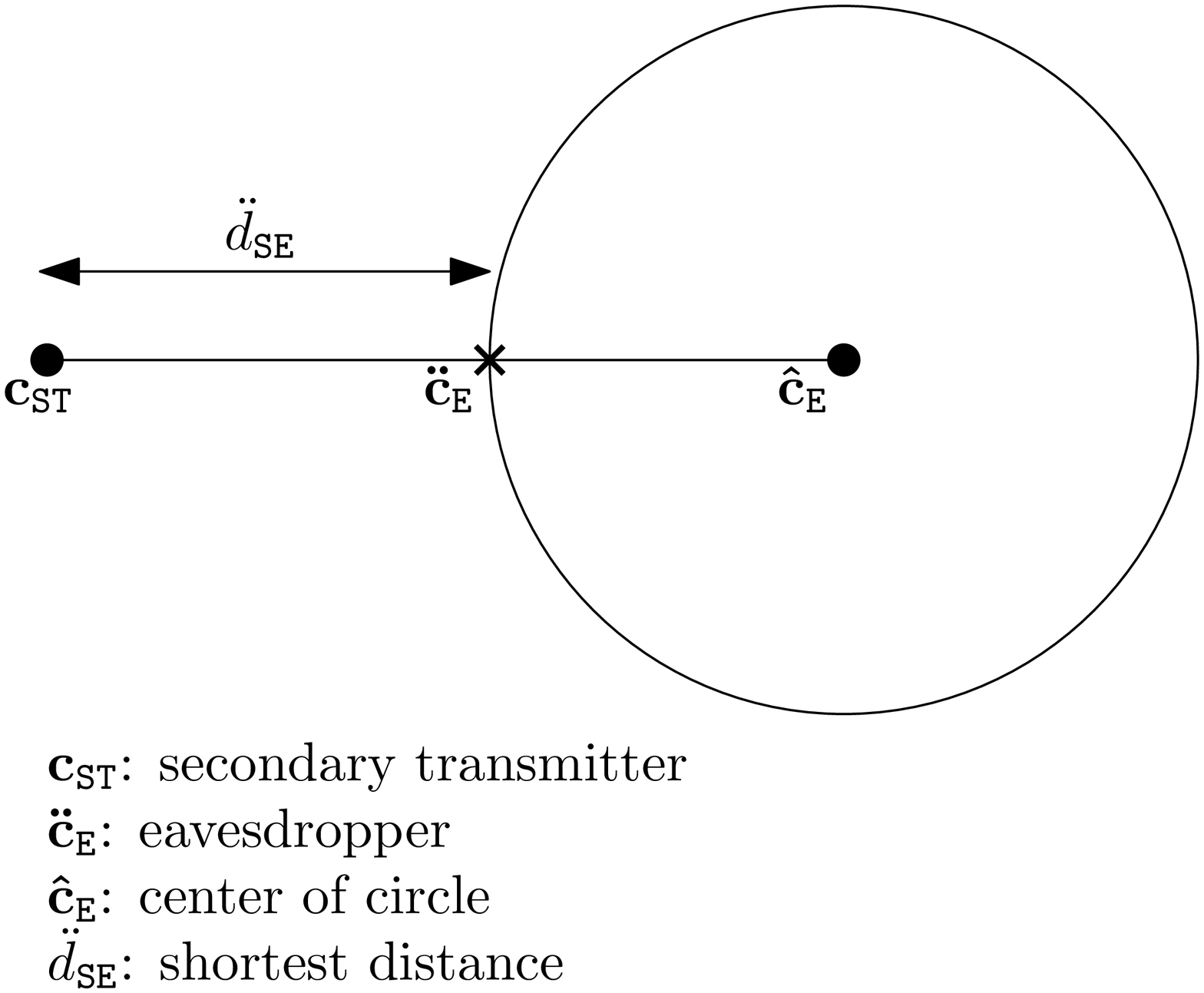}
	\end{center}
	\caption{The possible location of Eve in the ``strict" worst-case optimization problem.\label{fig:intersection}}
\end{figure}

\begin{proof}
	Please see Appendix \ref{app D}.
\end{proof}

It can be foreseen that this analysis can further reduce the complexity of the optimization problem, since $\hat{d}_{\mathtt{SE}}$ is replaced by $\ddot{d}_{\mathtt{SE}}$. Nevertheless, the property of the worst-case secrecy rate over the set of $(\Delta x,\Delta y)$ would be strictly remained when addressing $\hat{d}_{\mathtt{UE}}[n]$. Based on the developments presented in Section III-B, we formulate the strict average worst-case SRM (WC-SRM) problem of  CRN as follows:
\begin{IEEEeqnarray}{lCll}\label{im_originalproblem}
	\underline{\mathtt{\hat{P}}^{\mathtt{Safe}}:}& \quad \underset{\mathbf{c},\mathbf{p}}{\maxtt} &&\quad \hat{R}_{\SEC}^{\mathtt{LB}}\triangleq\frac{1}{N}\sum_{n\in\mathcal{N}}\hat{R}_{\mathtt{sec}}[n]\IEEEyessubnumber\label{im_originalproblem:a}\\
	&\quad\st&& \quad\eqref{mobility}, \eqref{originalproblem:c}-\eqref{originalproblem:f}, \eqref{interpowerPR} \IEEEyessubnumber\label{im_originalproblem:b}.
\end{IEEEeqnarray}
It can be seen that \eqref{problemLBequi} and \eqref{im_originalproblem} have similar structure and the same set of constraints. However, the objective function of \eqref{im_originalproblem} is more complex due to joint optimization under estimation errors, making the problem even more challenging to solve.

\subsection{Proposed Iterative Algorithm for Solving \eqref{im_originalproblem}}
In this section, we reuse all the slack optimization variables introduced in Section III. By following the same steps presented in Section III-C,
we arrive at the following safe and approximate optimization problem for the WC-SRM  \eqref{im_originalproblem}:  
\begin{IEEEeqnarray}{rCll}\label{im_problemLBequi}
	\underline{\mathtt{\hat{P}}^{\mathtt{Safe}}_{\mathtt{Appr}}:} & \underset{\substack{\mathbf{c},\mathbf{p},\mathbf{r} \\ \mathbf{t}_{\mathtt{S}}, \boldsymbol{\alpha}_{\mathtt{PS}} }}{\maxtt}&& \enspace \hat{R}_{\SEC}^{\mathtt{LB}}\triangleq \dfrac{1}{N}\sum_{n\in\mathcal{N}}\Bigl(r_{\mathtt{S}}[n]- r_{\mathtt{E}}[n]\Bigr) \IEEEyessubnumber\label{im_problemLBequi:a}\\
	\st& && \eqref{mobility}, \eqref{originalproblem:c}-\eqref{originalproblem:f},  \eqref{problemLBequi:c1:convex}, \eqref{problemLBequi:c2:convex}, \eqref{problemLBequi:c3:convex}, \eqref{interpowerPR:equi:aconvex}, \eqref{32},\qquad\IEEEyessubnumber\label{im_problemLBequi:b}\\
	& &&  \underset{(\Delta x_{\E},\Delta y_{\E}) \in \Xi}{\maxtt}  \hat{R}_{\mathtt{E}}^{\mathtt{UB}}\left[n\right] \leq r_{\mathtt{E}}[n], \ \forall n\in\mathcal{N}, \qquad \IEEEyessubnumber \label{im_problemLBequi:d}
\end{IEEEeqnarray}
where $ \mathbf{t}_{\Ss}\triangleq\{t_\Ss[n]\}_{n\in\mathcal{N}}$ and $ \boldsymbol{\alpha}_{\mathtt{PS}}\triangleq\{\alpha_\Pp[n],\alpha_\Ss[n]\}_{n\in\mathcal{N}} $. In \eqref{im_problemLBequi}, the objective is already linear function while all the constraints are convex, excepting for \eqref{im_problemLBequi:d}.

\indent\textit{\underline{Convexity of \eqref{im_problemLBequi:d}:}}
Similarly to \eqref{problemLBequi:dequi}, it follows that
\begin{subnumcases}{\label{im_problemLBequi:dequi} \eqref{im_problemLBequi:d}\Leftrightarrow}
\hat{R}_{\mathtt{E}}^{\mathtt{UB}}[n]\leq \log_2(1 + t_{\mathtt{E}}[n])\leq r_{\mathtt{E}}[n], \IEEEyessubnumber\label{im_problemLBequi:d1}\\
\dfrac{\gamma_{0}\ddot{d}_{\mathtt{SE}}^{-\varphi}p_{\mathtt{S}}\left[n\right]}{\beta[n]+1} \leq t_{\mathtt{E}}[n], \IEEEyessubnumber\label{im_problemLBequi:d2}\\
\beta\left[n\right] \leq \dfrac{\gamma_{0}p_{\mathtt{U}}\left[n\right]}{\alpha_{\mathtt{E}}\left[n\right]}, \IEEEyessubnumber\label{im_problemLBequi:d3}\\
\underset{(\Delta x_{\E},\Delta y_{\E}) \in \Xi}{\maxtt} f_{\mathtt{d}}(\mathbf{\hat{c}}_{\mathtt{E}}+(\Delta x_{\E}, \Delta y_{\E}, 0),\mathbf{c}_{\mathtt{U}}[n]) \qquad\nonumber\\
 \leq  \alpha_{\E}\left[n\right]. \qquad \IEEEyessubnumber\label{im_problemLBequi:d4} 
\end{subnumcases}
Constraints \eqref{im_problemLBequi:d1} and \eqref{im_problemLBequi:d3} are tackled as the same steps in \eqref{problemLBequi:d1} and \eqref{problemLBequi:d3}, respectively; \eqref{im_problemLBequi:d2}  can be convexified by replacing $ d_{\mathtt{SE}} $ in \eqref{problemLBequi:d2} with $\ddot{d}_{\mathtt{SE}}$ as
\begin{align}\label{im_problemLBequi:d2convex}
\frac{\gamma_{0}\ddot{d}_{\mathtt{SE}}^{-\varphi}}{2}\Bigl( \frac{p^2_{\mathtt{S}}[n]}{p^{(i)}_{\mathtt{S}}[n] (\beta^{(i)}[n]+1)}&&  \nonumber\\
  + \frac{p^{(i)}_{\mathtt{S}}[n]}{2\beta[n]-\beta^{(i)}[n]+1} \Bigr)&\leq& t_{\mathtt{E}}[n],\ \forall n\in\mathcal{N}.
\end{align}
Since $\Xi$ is a continuous set of estimation errors, enumerating all the possible cases of $(\Delta x,\Delta y)$ is obviously impossible.  To overcome this issue, we first reformulate \eqref{im_problemLBequi:d4} as follows:
\begin{subnumcases}{\label{estimation_error} \eqref{im_problemLBequi:d4}\Leftrightarrow}
\Delta x_{\E}^{2} + \Delta y_{\E}^{2} \leq Q^{2}, & \IEEEyessubnumber\label{estimation_error:a}\\
f_{\mathtt{d}}(\mathbf{\hat{c}}_{\mathtt{E}}+(\Delta x_{\E}, \Delta y_{\E}, 0),\mathbf{c}_{\mathtt{U}}[n])  \leq \alpha_{\E}\left[n\right]. \IEEEyessubnumber \label{estimation_error:b}
\end{subnumcases}
To address the nonconvex constraint \eqref{estimation_error}, we  introduce the following lemma.
\begin{lemma} \label{lemma 5}
	By applying S-procedure and Schur's complement \cite{Stephen}, \eqref{estimation_error} is transformed into the following convex constraints:
	\begin{subequations} \label{R_E UB convex}
		\begin{align}
			f_{\mathtt{d}}(\mathbf{\hat{c}}_{\mathtt{E}},\mathbf{c}_{\mathtt{U}}[n]) - \alpha_{\mathtt{E}}[n] & \leq \theta_{\E}[n], \ \forall n\in\mathcal{N}, \\
			\mu[n] & \geq 0, \ \forall n\in\mathcal{N}, \\
			\mathbf{S}[n] &\succeq 
			\mathbf{0}, \ \forall n\in\mathcal{N},
		\end{align}
	\end{subequations}
	where $\boldsymbol{\theta}\triangleq \{\theta_{\E}[n]\}_{n\in\mathcal{N}}$ and $\boldsymbol{\mu}\triangleq \{\mu[n]\}_{n\in\mathcal{N}}$ are  slack variables, and
	\begin{align}
		\mathbf{S}[n]\triangleq
		\begin{bmatrix}
		\mu[n]-1 & 0 & x_{\mathtt{U}}[n] - \hat{x}_{\mathtt{E}} \\
		0 & \mu[n]-1 & y_{\mathtt{U}}[n] - \hat{y}_{\mathtt{E}} \\
		x_{\mathtt{U}}[n] - \hat{x}_{\mathtt{E}} & y_{\mathtt{U}}[n] - \hat{y}_{\mathtt{E}} & -Q^{2}\mu[n]-\theta_{\E}[n]
		\end{bmatrix}. \nonumber
	\end{align}
\end{lemma}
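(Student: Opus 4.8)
The plan is to read \eqref{estimation_error} as a semi‑infinite (worst‑case) constraint: \eqref{estimation_error:b} must hold for \emph{every} error vector $u\triangleq(\Delta x_{\E},\Delta y_{\E})^{T}$ in the disk $\{u:\|u\|^{2}\le Q^{2}\}$ described by \eqref{estimation_error:a}. The idea is to convert this robust quadratic constraint into a single linear matrix inequality (LMI) by means of the S‑procedure, after first isolating the only non‑convex term with an auxiliary variable; Schur's complement \cite{Stephen} is then used to keep the residual quadratic bound in LMI form.

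First I would expand the squared distance as a quadratic in $u$. Using $z_{\E}=0$,
\begin{align}
f_{\mathtt{d}}\bigl(\mathbf{\hat{c}}_{\mathtt{E}}+(\Delta x_{\E},\Delta y_{\E},0),\mathbf{c}_{\mathtt{U}}[n]\bigr)
&=\bigl(\Delta x_{\E}-(x_{\mathtt{U}}[n]-\hat{x}_{\mathtt{E}})\bigr)^{2}+\bigl(\Delta y_{\E}-(y_{\mathtt{U}}[n]-\hat{y}_{\mathtt{E}})\bigr)^{2}+z_{\mathtt{U}}^{2}[n] \nonumber\\
&=\|u\|^{2}-2\,\mathbf{v}[n]^{T}u+f_{\mathtt{d}}(\mathbf{\hat{c}}_{\mathtt{E}},\mathbf{c}_{\mathtt{U}}[n]),
\end{align}
with $\mathbf{v}[n]\triangleq(x_{\mathtt{U}}[n]-\hat{x}_{\mathtt{E}},\,y_{\mathtt{U}}[n]-\hat{y}_{\mathtt{E}})^{T}$, so that \eqref{estimation_error} is equivalent to the quadratic implication ``$\|u\|^{2}-Q^{2}\le 0\ \Rightarrow\ \alpha_{\E}[n]-f_{\mathtt{d}}(\cdot)\ge 0$''. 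Since $Q>0$, the point $u=0$ satisfies $\|u\|^{2}<Q^{2}$ strictly, so the Slater‑type regularity condition holds and the S‑procedure is lossless. By the S‑lemma, the implication holds if and only if there exists a multiplier $\mu[n]\ge 0$ with
\begin{equation*}
\alpha_{\E}[n]-f_{\mathtt{d}}\bigl(\mathbf{\hat{c}}_{\mathtt{E}}+(\Delta x_{\E},\Delta y_{\E},0),\mathbf{c}_{\mathtt{U}}[n]\bigr)+\mu[n]\bigl(\|u\|^{2}-Q^{2}\bigr)\ \ge\ 0,\qquad\forall\,u\in\mathbb{R}^{2}.
\end{equation*}
Collecting coefficients, the left‑hand side is the quadratic form $[\,u^{T},1\,]\,\mathbf{M}[n]\,[\,u^{T},1\,]^{T}$, and a quadratic is nonnegative for all $u$ iff its coefficient matrix is positive semidefinite; hence the robust constraint is equivalent to $\exists\,\mu[n]\ge 0$ with
\begin{equation*}
\mathbf{M}[n]\triangleq
\begin{bmatrix}
(\mu[n]-1)\mathbf{I}_{2} & \mathbf{v}[n]\\
\mathbf{v}[n]^{T} & \alpha_{\E}[n]-f_{\mathtt{d}}(\mathbf{\hat{c}}_{\mathtt{E}},\mathbf{c}_{\mathtt{U}}[n])-Q^{2}\mu[n]
\end{bmatrix}\succeq\mathbf{0}.
\end{equation*}

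The $(3,3)$ entry of $\mathbf{M}[n]$ contains $-f_{\mathtt{d}}(\mathbf{\hat{c}}_{\mathtt{E}},\mathbf{c}_{\mathtt{U}}[n])$, which is concave in $\mathbf{c}_{\mathtt{U}}[n]$, so $\mathbf{M}[n]\succeq\mathbf{0}$ is not yet convex in the decision variables. I would therefore introduce the slack $\theta_{\E}[n]$ with the convex constraint $f_{\mathtt{d}}(\mathbf{\hat{c}}_{\mathtt{E}},\mathbf{c}_{\mathtt{U}}[n])-\alpha_{\E}[n]\le\theta_{\E}[n]$ (a sum of squares bounded by an affine term, which Schur's complement renders as an LMI) and replace $\alpha_{\E}[n]-f_{\mathtt{d}}(\mathbf{\hat{c}}_{\mathtt{E}},\mathbf{c}_{\mathtt{U}}[n])$ by $-\theta_{\E}[n]$; the resulting matrix is exactly $\mathbf{S}[n]$, now affine in $(\mathbf{c}_{\mathtt{U}}[n],\mu[n],\theta_{\E}[n])$. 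Because raising a diagonal entry preserves positive semidefiniteness, $\mathbf{S}[n]\succeq\mathbf{0}$ together with the above bound implies $\mathbf{M}[n]\succeq\mathbf{0}$; conversely, from $\mathbf{M}[n]\succeq\mathbf{0}$ one takes $\theta_{\E}[n]=f_{\mathtt{d}}(\mathbf{\hat{c}}_{\mathtt{E}},\mathbf{c}_{\mathtt{U}}[n])-\alpha_{\E}[n]$, whence $\mathbf{S}[n]=\mathbf{M}[n]$. Chaining these equivalences, together with $\mu[n]\ge 0$, gives $\eqref{estimation_error}\Leftrightarrow\eqref{R_E UB convex}$.

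The hard part will be the careful handling of the S‑procedure rather than any heavy computation: (i) writing the uncertainty set so that the multiplier enters with the correct sign — the constraint is $\|u\|^{2}-Q^{2}\le 0$, so the certificate \emph{adds} $\mu[n](\|u\|^{2}-Q^{2})$, which is precisely why the top‑left block of $\mathbf{S}[n]$ is $(\mu[n]-1)\mathbf{I}_{2}$ and not $-(\mu[n]+1)\mathbf{I}_{2}$; (ii) invoking the Slater condition so that the reformulation is an exact equivalence and not a conservative convex restriction; and (iii) verifying that substituting the slack $\theta_{\E}[n]$ for the concave UAV‑position term neither tightens nor loosens the feasible set, which is the monotonicity‑of‑PSD argument above. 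The remaining steps (expanding $f_{\mathtt{d}}$, collecting coefficients, and assembling the LMI) are routine.
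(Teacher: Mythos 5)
Your proposal is correct and follows essentially the same route as the paper's Appendix E: expand the squared distance as a quadratic in the error vector, introduce the slack $\theta_{\E}[n]$ for the concave term $f_{\mathtt{d}}(\mathbf{\hat{c}}_{\mathtt{E}},\mathbf{c}_{\mathtt{U}}[n])-\alpha_{\E}[n]$, apply the S-procedure with multiplier $\mu[n]\geq 0$, and write the resulting robust quadratic nonnegativity as the LMI $\mathbf{S}[n]\succeq\mathbf{0}$. If anything, you are more explicit than the paper about the Slater point $u=0$ guaranteeing losslessness and about the two-way (monotonicity) argument justifying the slack substitution, so the equivalence claim is cleanly established.
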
 

\begin{proof}
	Please see Appendix \ref{app E}. 
\end{proof}

As summarized in Algorithm \ref{proposedAlg}, the solution of the WC-SRM problem \eqref{im_originalproblem} can be found by successively solving a safe and convex program, of which the approximated problem at iteration $i+1$ is expressed as
\begin{IEEEeqnarray}{rCll}\label{im_problemLBequi:convex}
	\underline{\mathtt{\hat{P}}^{\mathtt{Safe}}_{\mathtt{Convex}}:} &\quad \underset{\boldsymbol{\hat{\Psi}}}{\maxtt}&& \quad \hat{R}_{\mathtt{sec}}^{\mathtt{LB},(i)}\triangleq \dfrac{1}{N}\sum_{n\in\mathcal{N}}\Bigl(r_{\mathtt{S}}[n]- r_{\mathtt{E}}[n]\Bigr) \qquad \IEEEyessubnumber\label{im_problemLBequi:a:convex} \\
	&\quad\st&& \quad \eqref{mobility}, \eqref{originalproblem:c}-\eqref{originalproblem:f}, \eqref{problemLBequi:c1:convex}, \eqref{problemLBequi:c2:convex}, \eqref{problemLBequi:c3:convex}, \nonumber \\
	& && \quad  \eqref{problemLBequi:d1convex}, \eqref{problemLBequi:d3convex}, \eqref{interpowerPR:equi:aconvex}, \eqref{32}, \eqref{im_problemLBequi:d2convex}, \eqref{R_E UB convex}, \qquad \IEEEyessubnumber\label{im_problemLBequi:b:convex}
\end{IEEEeqnarray}
where $\boldsymbol{\hat{\Psi}}\triangleq \{\mathbf{c},\mathbf{p},\mathbf{r},\mathbf{t},\boldsymbol{\alpha},\boldsymbol{\beta},\boldsymbol{\theta},\boldsymbol{\mu}\}$, which correspondingly provides  $\boldsymbol{\hat{\Psi}}^{(i)}\triangleq\{\mathbf{c}^{(i)},\mathbf{p}^{(i)},\mathbf{r}^{(i)},\mathbf{t}^{(i)},\boldsymbol{\alpha}^{(i)}, $ $\boldsymbol{\beta}^{(i)},\boldsymbol{\theta}^{(i)},\boldsymbol{\mu}^{(i)}\}$ as the optimal solution for \eqref{im_problemLBequi:convex} at iteration $i $. 
\begin{algorithm}
	\begin{algorithmic}[1]
		\protect\caption{Proposed Algorithm for Solving \eqref{im_originalproblem}}
		\label{proposedAlg}
		\STATE \textbf{Initialization:} Set $i:=0$ and generate an initial feasible point $\boldsymbol{\hat{\Psi}}^{(0)}$ satisfying \eqref{im_problemLBequi:b:convex}. 
		\REPEAT
		\STATE {Set $i:=i+1$}; 
		\STATE {Find the optimal solution $\boldsymbol{\hat{\Psi}}^{(*)}$ by solving \eqref{im_problemLBequi:convex}};
		\STATE Update $\boldsymbol{\hat{\Psi}}^{(i)}=\boldsymbol{\hat{\Psi}}^{(i-1)}$;
		\UNTIL {$\dfrac{\hat{R}_{\mathtt{sec}}^{\mathtt{LB},(i)}-\hat{R}_{\mathtt{sec}}^{\mathtt{LB},(i-1)}}{\hat{R}_{\mathtt{sec}}^{\mathtt{LB},(i-1)}} \leq \epsilon_{\mathtt{tol}}$}.		
	\end{algorithmic} 
\end{algorithm}

{\noindent \textit{\underline{Complexity Analysis:}} The optimization problem \eqref{im_problemLBequi:convex} has $15N$ real variables and $18N$ constraints}. The complexity required
to solve \eqref{im_problemLBequi:convex} in each iteration of Algorithm 2  is  $\mathcal{O}((18N)^{2.5}(15N)^{2}+(18N)^{3.5})$.


\subsection{Convergence Analysis of Algorithms 1 and 2}  
We can  see that the objective values in  \eqref{problemLBequi:a:convex} and \eqref{im_problemLBequi:a:convex} are non-decreasing with respect to the number of iterations, and the convergence proof for the optimization problems is given in \cite[Appendix C]{NguyenJSAC18}. To be
self-contained, we briefly provide the convergence analysis as follows. We can  see that the approximations of nonconvex constraints \{\eqref{interpowerPR}, \eqref{problemLBequi:c},  \eqref{problemLBequi:d}\} for problem \eqref{problemLB} and \{\eqref{interpowerPR}, \eqref{problemLBequi:c}, \eqref{im_problemLBequi:d}\} for problem \eqref{im_originalproblem} satisfy  properties of the IA method given in \cite{Marks:78}. This  means that the proposed Algorithms 1 and 2 for solving \eqref{problemLBequi:convex} and \eqref{im_problemLBequi:convex}, respectively, generate the sequences of non-decreasing objective values (i.e., $R_{\SEC}^{\mathtt{LB},(i)}\geq R_{\SEC}^{\mathtt{LB},(i-1)}$ and $\hat{R}_{\mathtt{sec}}^{\mathtt{LB},(i)}\geq \hat{R}_{\mathtt{sec}}^{\mathtt{LB},(i-1)}$), which are upper bounded  due to the power constraints, leading to a monotonic convergence. At each
iteration, the achieved optimal solutions satisfy the Karush-Kuhn-Tucker (KKT) conditions of \eqref{problemLBequi:convex} and \eqref{im_problemLBequi:convex}, i.e., step 4 of Algorithms 1 and 2, respectively. By IA principle, the KKT conditions of \eqref{problemLBequi:convex} and \eqref{im_problemLBequi:convex} are also identical to those of \eqref{problemLB} and \eqref{im_originalproblem}, respectively, once the conditions $\boldsymbol{\Psi}^{(i)}=\boldsymbol{\Psi}^{(i-1)}$ (in Algorithm 1) and $\boldsymbol{\hat{\Psi}}^{(i)}=\boldsymbol{\hat{\Psi}}^{(i-1)}$ (in Algorithm 2) are met \cite[Theorem 1]{Marks:78}.


\section{Numerical Results}
\begin{table}[t]
\caption{Simulation Parameters}
	\label{parameter}
	\centering
		\begin{tabular}{l|l}
		\hline
				Parameter & Value \\
		\hline\hline
		    System bandwidth                            & 10 MHz \\
				Path loss exponent, $\varphi$  & 3\\
				Number of time slots, $N$ & 500\\
				Channel gain at the reference distance, $\rho_0$& 10 dB\\
				Power budget at  ST, $P_{\Ss}^{\max}$ & 40 dBm\\
				Average power limit at  ST, $\bar{P}_{\Ss}$ & $P_{\Ss}^{\max}/2$\\
				Power budget at  UAV, $P_{\UAV}^{\max}$ & 4 dBm\\
				Average power limit at  UAV, $\bar{P}_{\UAV}$ & $P_{\UAV}^{\max}$/2\\
				Maximum and minimum altitudes of  UAV, $(h^{\max},h^{\min})$ & (150,\ 50) m\\
				Maximum speed of  UAV, $V_{\max}$& 10 m/s\\
				Average interference power threshold at  PR, $\varepsilon$ & -20 dBm\\
				Noise power, $\sigma^2$               & -70 dBm\\
				Error tolerance threshold, $\epsilon_{\mathtt{tol}}$ & $10^{-4}$\\
					\hline		   				
		\end{tabular}
\end{table}
We now evaluate the performance of the proposed schemes using computer simulations in the MATLAB environment. The key parameters are given in Table \ref{parameter}. 
The ST, SR and PR are assumed to locate at $\left(0,0,0\right)$, $\left(300,0,0\right)$ and $\left(0,250,0\right)$, respectively. We also assume that UAV flies from the original location at $(-100,200,100)$ to the destination at $(500,200,100)$. The other parameters are provided in the captions of the figures. 
The  convex solver SeDuMi is used to solve the convex program.

The results obtained by Algorithms 1 and 2 are labeled as ``Proposed scheme (Alg. 1)" and ``Proposed scheme (Alg. 2)", respectively. For comparison purpose, we investigate three other schemes:
\begin{itemize}
\item ``Fixed power:'' In every time slot, ST and UAV transmit their signals with the fixed transmit powers, i.e., $\bar{P}_{\Ss}$ and $\bar{P}_{\UAV}$, respectively, and only the UAV's trajectory is optimized. 
\item ``Straight line trajectory:'' The UAV flies along the straight line from the initial location to the final location, and only the transmit power of ST and UAV is optimized.
\item ``No UAV-aided JN:''  We set $p_{\UAV}[n] = 0, \forall n$ (i.e., without using UAV-aided JN), which corresponds to the traditional on-ground CRN.
\end{itemize}
The solutions of these schemes can  also be obtained by using Algorithms 1 and 2 after some slight modifications.

\subsection{Numerical Results for Perfect Location Information of Eve}
In this scenario,  Eve is placed at (150, 250, 0), which is closer to ST than SR. This unfair setting aims at demonstrating the effectiveness of using UAV-aided JN.
\begin{figure}
    \begin{center}
    \includegraphics[width=0.8\columnwidth,trim={0cm 0cm 0cm 0}]{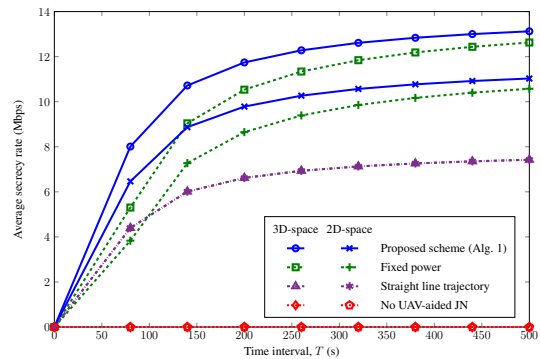}
    \end{center}
    \caption{Average secrecy rate versus time interval $T$, with perfect location information of Eve.\label{fig:SRvsTime}}
\end{figure}

In Fig. \ref{fig:SRvsTime}, the average secrecy rates of different schemes are illustrated versus the time interval, $T\in[0,500\text{s}]$. It is not difficult to see that the average secrecy rate is always less than or equal to zero in the case of ``No UAV-aided JN''  scheme. The reason is that the ST-SR link has worse channel quality than the ST-Eve link. This result verifies the importance of using UAV-aided JN. The other important observations from the figure are as follows. First, all schemes provide the non-decreasing secrecy rates as $T$ increases. This is because the larger $T$ the larger time for UAV to hover over Eve to transmit JN more effectively. Second, from numerical results of the average secrecy rates of ``Straight line trajectory'' when compared to the ``Proposed method (Alg. 1)'' and ``Fixed power'', we can see that the UAV's trajectory optimization is highly important, since it can help UAV fly to an optimal location to interfere with the ST-Eve channel. Third, the proposed method always provides the best performance along with $ T $. Finally, the secrecy rate of the proposed scheme in the 3D space is  superior to that in the 2D space, and
an improvement of almost $2$ Mbps is achieved at $T = 400$s.   

\begin{figure}[t]
	\centering
	\begin{subfigure}[Trajectories of UAV in the 2D space.]
		{
			\includegraphics[width=0.8\columnwidth,trim={0cm 0cm 0cm 0cm}]{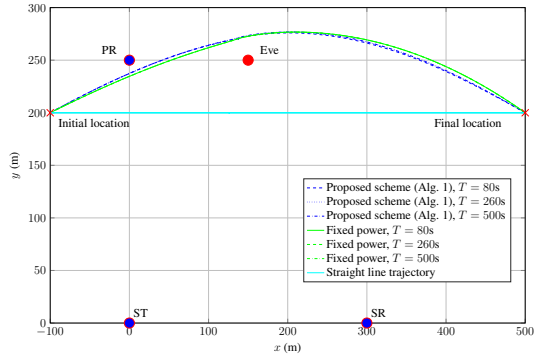}
			\label{fig:Trajectory}
		}
	\end{subfigure}
	\hfill
	\begin{subfigure}[Trajectories of UAV in the 3D space.]
		{
			\includegraphics[width=0.8\columnwidth,trim={0cm 0cm 0cm 0cm}]{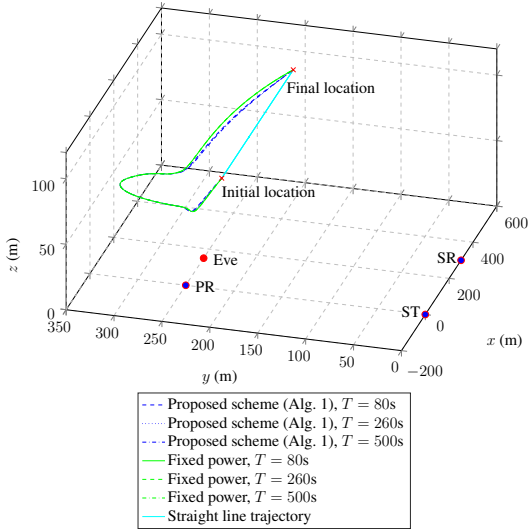}
			\label{fig:Trajectory-3D}
		}
	\end{subfigure}
	\caption{Trajectories of UAV for different schemes with perfect location information of Eve.}
	\label{fig:Trajectory-2D3D}
\end{figure}


The trajectories of UAV are depicted in Figs. \ref{fig:Trajectory} and \ref{fig:Trajectory-3D} for different schemes with $T\in\{80\text{s}, 260\text{s}, 500\text{s}\}$ in both the 2D and 3D spaces. Except for ``Straight line trajectory", the other schemes follow similar trajectories,  since  UAV aims at emitting  JN to jam Eve in a short distance (but keep far away from  SR to mitigate the interference caused by JN), as long as satisfying the PR's interference power requirement. Furthermore, the distances between  Eve and UAV are defined as a function of $n$ in Fig. \ref{fig:DisUAV-Eve}. Although the optimal UAV-Eve distance is intuitively $ 100 $ m, UAV does not move to the point above Eve directly. To maximize the average secrecy rate, the UAV trajectory is optimized under a tradeoff between the secrecy performance improvement and the amount of undesired interference  to  SR and PR.



\begin{figure}
    \begin{center}
    \includegraphics[width=1\columnwidth,trim={0cm 0cm 0cm 0cm}]{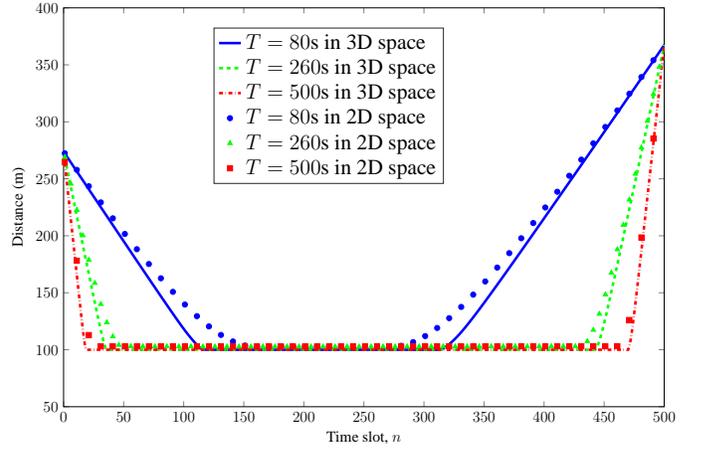}
    \end{center}
    \caption{UAV-Eve distance per time slot $n$ during time interval $T$, with perfect location information of Eve.\label{fig:DisUAV-Eve}}
\end{figure}

 


\begin{figure}
	\begin{center}
		\includegraphics[width=0.8\columnwidth,trim={0cm 0cm 0cm 0cm}]{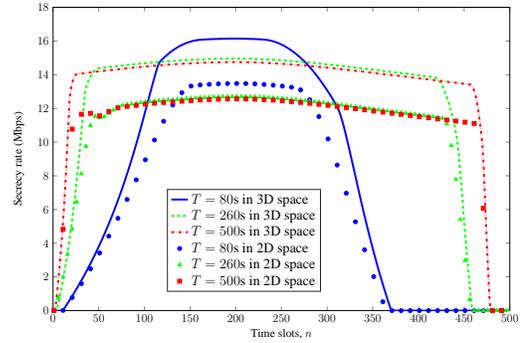}
	\end{center}
	\caption{Secrecy rate of Algorithm 1 per time slot $n$ during time interval $T$, with perfect location information of Eve.\label{fig:SecrecyRate_Timeslot}}
\end{figure}
Fig. \ref{fig:SecrecyRate_Timeslot} depicts the secrecy rate of Algorithm 1 per time slot with different values of $T$ in both the 2D and 3D spaces. One can see that the number of time slots having the positive secrecy rate in the 3D space is much higher than that in the 2D space, which demonstrates the effectiveness of jointly optimizing the UAV's altitude. This phenomenon can be further confirmed by the results in Fig. \ref{fig:DisUAV-Eve}, where the number of time slots having the optimal UAV-Eve distance in the 3D space is higher than that in the 2D space. Moreover, the secrecy rates reduce to zero at the last time slots. This is because UAV moves closer to SR than Eve at those time slots, and thus, it must stop sending JN.


\subsection{Numerical Results for Imperfect Location Information of Eve}
We assume that Eve is located in a circular region  centered at $(x_{\mathtt{E}_{0}},y_{\mathtt{E}_{0}},h_{\mathtt{E}_{0}})=(150,250,0)$ with the radius $Q=20$ m. The other simulation parameters are the same as before. 

\begin{figure}[t]
	\centering
	\begin{subfigure}[Average secrecy rate versus the time interval $T$.]
		{
			\includegraphics[width=0.8\columnwidth,trim={0cm 0cm 0cm 0cm}]{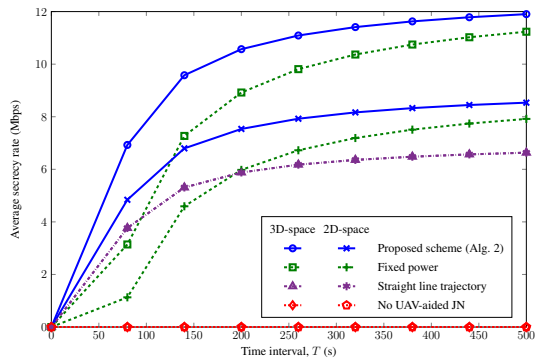}
			\label{fig:Im_SRvsTime}
		}
	\end{subfigure}
	\hfill
	\begin{subfigure}[Secrecy rate of Algorithm 2 per time slots $n$ during time interval $T$.]
		{
			\includegraphics[width=0.8\columnwidth,trim={0cm 0cm 0cm 0cm}]{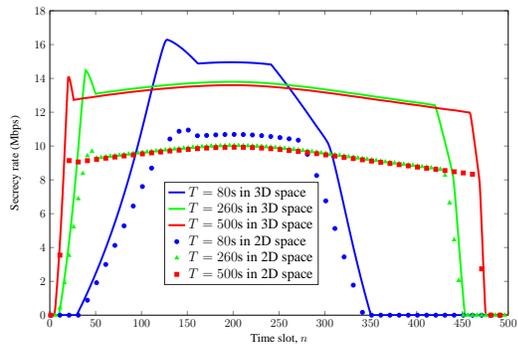}
			\label{fig:Im_SecrecyRate_Timeslot}
		}
	\end{subfigure}
	\caption{Secrecy rates with imperfect location information of Eve.}
	\label{fig:Im_SRvsTN}
\end{figure}

We plot the average secrecy rate versus the time interval $T$ in Fig. \ref{fig:Im_SRvsTime} and the secrecy rate of Algorithm 2 per time slot with different values of $T$ in Fig. \ref{fig:Im_SecrecyRate_Timeslot}. Unsurprisingly, the secrecy rate of all schemes is degraded, when compared to the case of perfect location information of Eve. Notably, the performance gaps between 3D and 2D cases are even deeper. In Fig. \ref{fig:Im_SRvsTime}, at $T=400$s, the performance gain of 3D over 2D is about 3.5 Mbps, compared to 2 Mbps in Fig. \ref{fig:SRvsTime}. These results confirm the robustness of the proposed scheme
against the  effect of imperfect location information of Eve.  Fig. \ref{fig:Im_Trajectory-2D3D} illustrates the trajectories of UAV in the 2D and 3D spaces, and we recall the discussions presented for Fig. \ref{fig:Trajectory-2D3D}.

\begin{figure}[t]
	\centering
	\begin{subfigure}[Trajectories of UAV in the 2D space.]
		{
			\includegraphics[width=0.9\columnwidth,trim={0cm 0cm 0cm 0cm}]{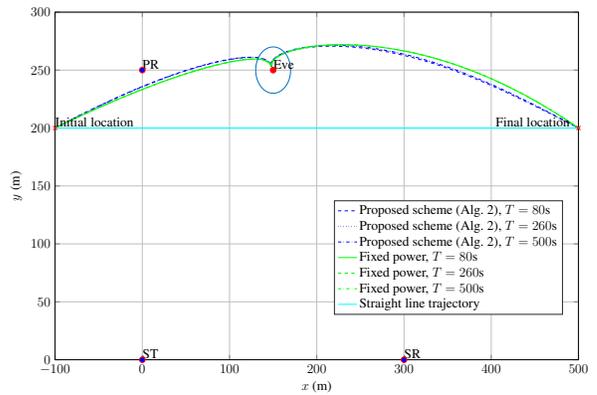}
			\label{fig:Im_Trajectory}
		}
	\end{subfigure}
	\hfill
	\begin{subfigure}[Trajectories of UAV in the 3D space.]
		{
			\includegraphics[width=1.2\columnwidth,trim={0cm 0cm 0cm 0cm}]{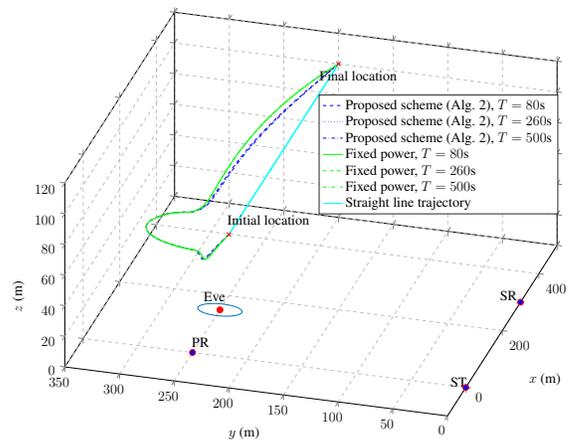}
			\label{fig:Im_Trajectory-3D}
		}
	\end{subfigure}
	\caption{Trajectories of UAV for different schemes with imperfect location information of Eve.}
	\label{fig:Im_Trajectory-2D3D}
\end{figure}

\begin{figure}
	\begin{center}
		\includegraphics[width=0.8\columnwidth,trim={0cm 0cm 0cm 0}]{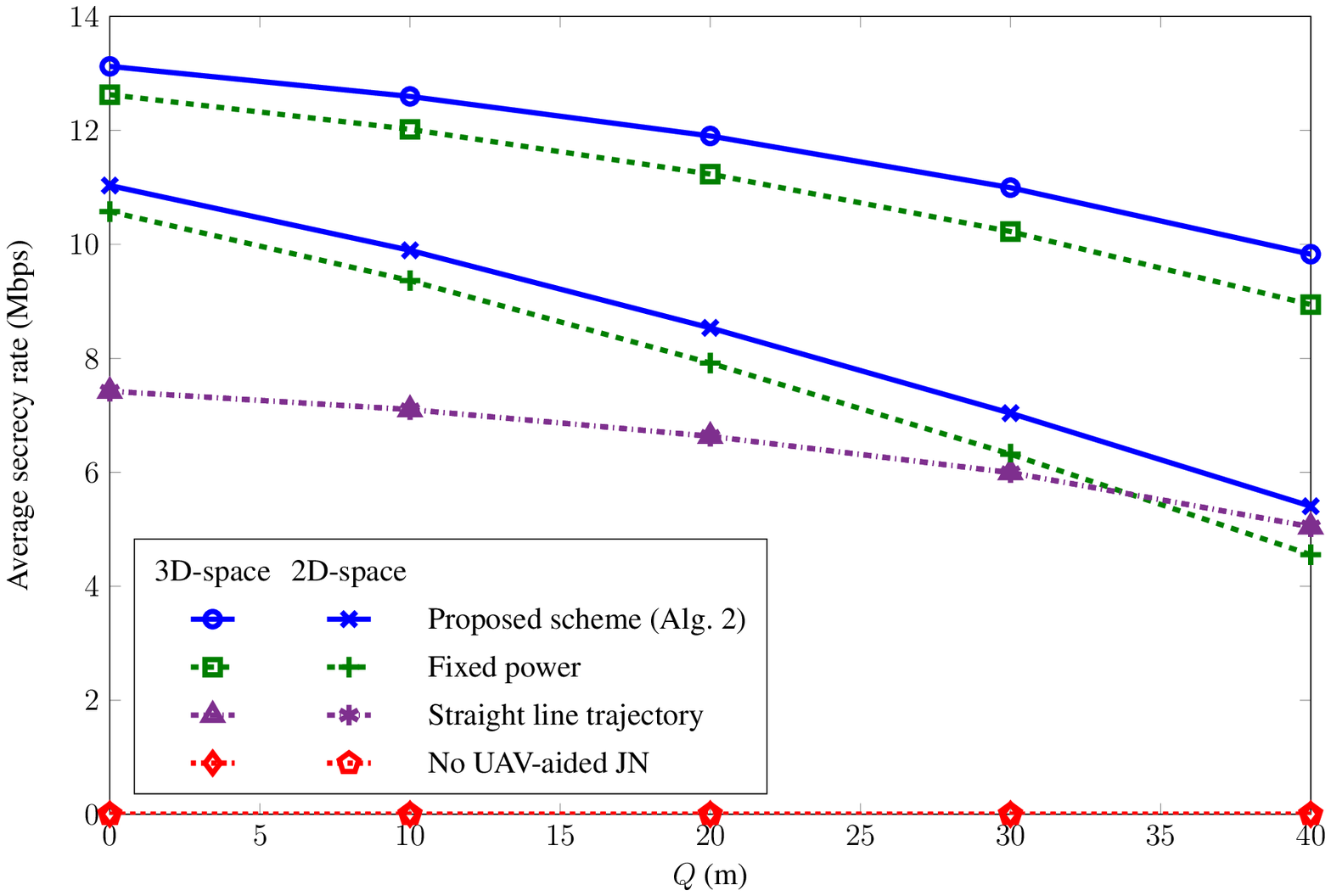}
	\end{center}
	\caption{Average secrecy rate of  different schemes versus $Q$.\label{fig:Radius}}
\end{figure}
In Fig. \eqref{fig:Radius}, we plot the secrecy rate as a function of $Q\in[0,\ 40]$ m. We note that  $Q=0$ corresponds to the case of perfect location information of Eve. It can be observed that  the average secrecy rate of all schemes drops quickly when $Q$ increases. The reasons for
these results are two-fold: 1) For a larger $Q$, Eve is able to move closer to ST to wiretap confidential messages more effectively; 2) The active region of Eve becomes wider, and thus, the location information of Eve is more difficult to estimate. In this case, the use of UAV-aided JN becomes less effective. Nevertheless, the proposed scheme still achieves the best secrecy rate by jointly optimizing the  transmit power and UAV's trajectory in the 3D space.

\subsection{Convergence Behavior of Algorithms 1 and 2}

\begin{figure}[t]
	\centering
	\begin{subfigure}[Algorithm 1.]
		{
			\includegraphics[width=0.8\columnwidth,trim={0cm 0cm 0cm 0cm}]{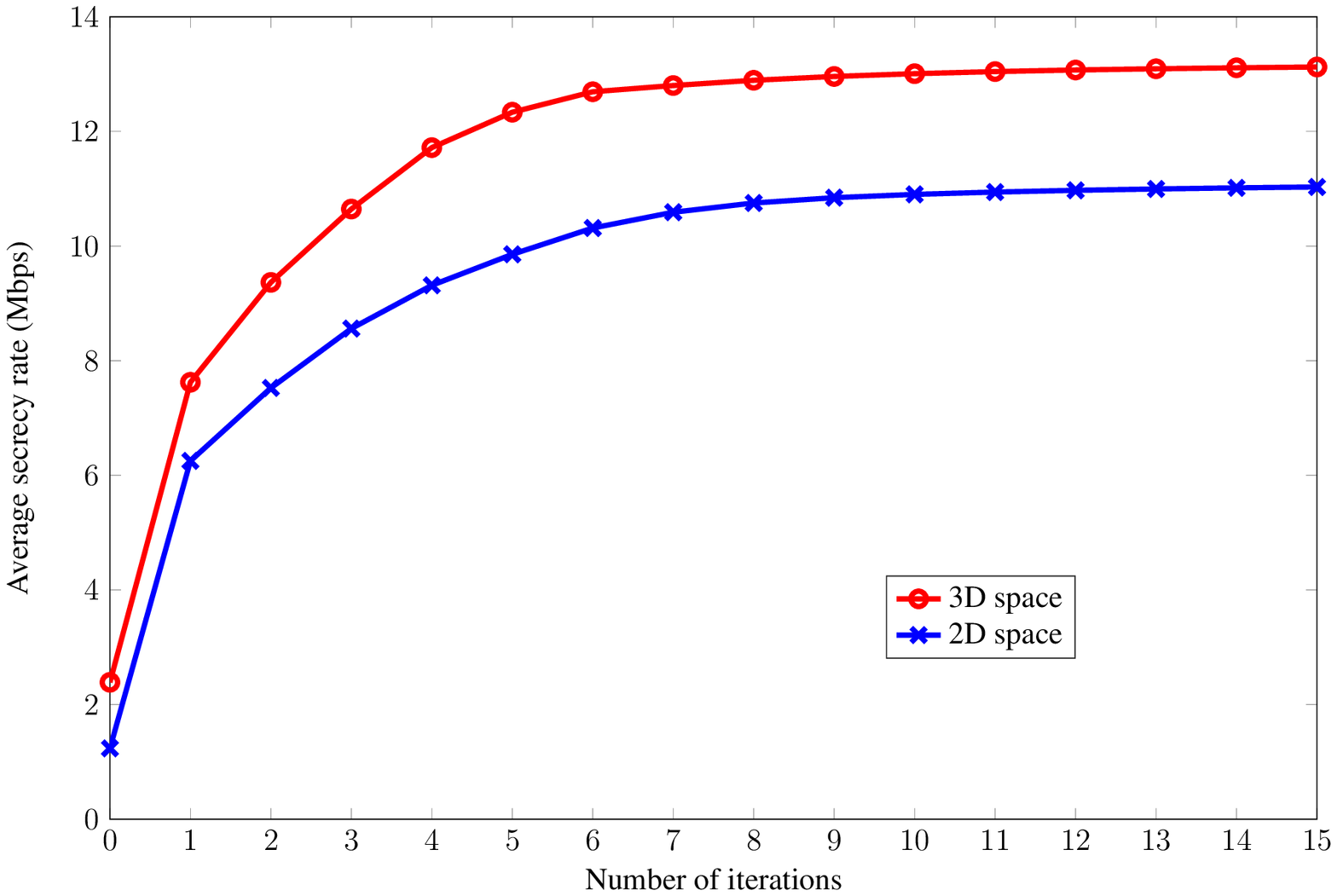}
			\label{fig:Convergence}
		}
	\end{subfigure}
	\hfill
	\begin{subfigure}[Algorithm 2.]
		{
			\includegraphics[width=0.8\columnwidth,trim={0cm 0cm 0cm 0cm}]{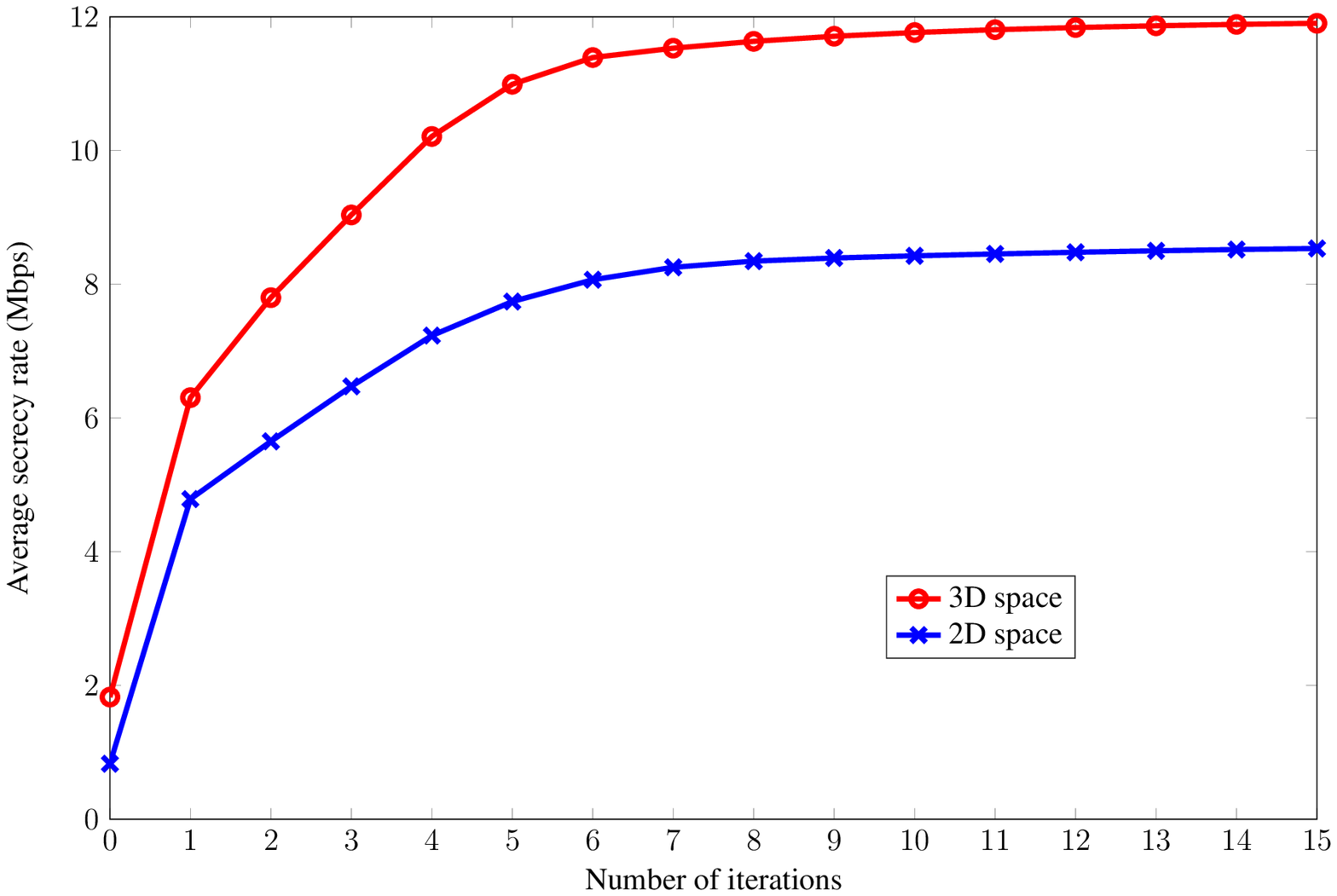} 
			
			\label{fig:Im_fig:Convergence}
		}
	\end{subfigure}
	\caption{Typical convergence behaviors of Algorithms 1 and 2 for $T = 500$s.}
	\label{fig:ConvergenceAlg12}
\end{figure}
The convergence behavior of Algorithms 1 and 2  is shown in Fig. \ref{fig:ConvergenceAlg12}, where the convergence condition is set as $\epsilon_{\mathtt{tol}}=10^{-4}$. One can see that that the proposed Algorithms monotonically improve the 
 secrecy rate after every iteration, since the optimization
variables are adjusted to find a better solution for 
next iterations. Intuitively, Algorithms 1 and 2 require only about 8 iterations to obtain the maximum secrecy rates, which are also typical for other settings.

\section{Conclusion}
This paper studied the optimization problems of maximizing the average secrecy rate of the secondary system, where a UAV is deployed to transmit JN for interfering the ST-Eve channel in both perfect and imperfect location information of Eve. The problems under the power constraints and the PR's interference power threshold are formulated as  nonconvex optimization problems. To address these problems, we first derive new nonconvex problems but with more tractable forms, and then apply  IA-based method to develop  low-complexity iterative algorithms for their solutions. Numerical results  confirmed fast convergence of the proposed algorithms  and significant performance improvement over existing schemes. They have also revealed that joint optimization of UAV's altitude (in 3D space) provides robustness against the effect of imperfect location information of Eve.


\appendices


\section{Proof of Lemma \ref{lemma 4}} \label{app D}
The worst-case secrecy rate can be written as
\begin{equation}\label{worst-case design}
\hat{R}_{\mathtt{sec}}[n] = R_{\mathtt{S}}\left[n\right] - \underset{(\Delta x,\Delta y) \in \Xi}{\maxtt} \enspace R_{\mathtt{E}}[n].
\end{equation}
Similarly to \eqref{22_app} and \eqref{27_app},  $R_{\mathtt{S}}[n]$ and $R_{\mathtt{E}}[n]$ can be safely derived as
\begin{equation}\label{worst-case design - safe form}
\hat{R}_{\mathtt{sec}}[n] = R_{\mathtt{S}}^{\mathtt{LB}}\left[n\right] - \underset{(\Delta x,\Delta y) \in \Xi}{\maxtt} \enspace R_{\mathtt{E}}^{\mathtt{UB}}[n]\bigl(\hat{d}_{\mathtt{SE}},\hat{d}_{\mathtt{UE}}[n]\bigr),
\end{equation}
where $ R_{\mathtt{S}}^{\mathtt{LB}}\left[n\right] $ is given in \eqref{22_app}. Considering $\bigl(\hat{d}_{\mathtt{SE}},\hat{d}_{\mathtt{UE}}[n]\bigr)$, $R_{\mathtt{E}}^{\mathtt{UB}}[n]$ in \eqref{27_app} can be rewritten as
\begin{align}
R_{\mathtt{E}}\left[n\right]&\leq R_{\mathtt{E}}^{\mathtt{UB}}[n]\bigl(\hat{d}_{\mathtt{SE}},\hat{d}_{\mathtt{UE}}[n]\bigr)\nonumber\\ 
&= \log_{2}\Bigl(1+\dfrac{\gamma_{0}\hat{d}_{\mathtt{SE}}^{-\varphi}p_{\mathtt{S}}\left[n\right]}{\gamma_{0}\hat{d}^{-2}_{\mathtt{UE}}[n]p_{\mathtt{U}}[n]+1}\Bigr)\label{31}.
\end{align}
Differently from the case of the perfect location information,  $\hat{d}_{\mathtt{SE}}$ is also an optimization variable of $R_{\mathtt{E}}^{\mathtt{UB}}[n]$. However, the joint optimization with $\hat{d}_{\mathtt{SE}}$ will make the optimization problem very complex. To reduce the complexity of the problem, a supremum of $R_{\mathtt{E}}^{\mathtt{UB}}[n]\bigl(\hat{d}_{\mathtt{SE}},\hat{d}_{\mathtt{UE}}[n]\bigr)$ over $\hat{d}_{\mathtt{SE}}$ is considered as
\begin{align}\label{supremum}
R_{\mathtt{E}}^{\mathtt{UB}}[n]\bigl(\ddot{d}_{\mathtt{SE}},\hat{d}_{\mathtt{UE}}[n]\bigr) = \sup_{\hat{d}_{\mathtt{SE}} \in \mathcal{D}} R_{\mathtt{E}}^{\mathtt{UB}}[n]\left(\hat{d}_{\mathtt{SE}},\hat{d}_{\mathtt{UE}}[n]\right),
\end{align}
where $\mathcal{D}$ is the set of  distances from ST to Eve; $\ddot{d}_{\mathtt{SE}}$ is the shortest distance between ST and a possible location of Eve, denoted by $ \mathbf{\ddot{c}}_{\mathtt{E}} $, such that $\mathbf{\ddot{c}}_{\mathtt{E}}\in\mathtt{C}\triangleq\{\mathbf{\hat{c}}_{\mathtt{E}}+(\Delta x_{\E}, \Delta y_{\E}, 0)|(\Delta x_{\E}, \Delta y_{\E})\in\Xi\}$. Notably,  the expression in \eqref{31} indicates that the supremum of $R_{\mathtt{E}}^{\mathtt{UB}}[n]\bigl(\hat{d}_{\mathtt{SE}},\hat{d}_{\mathtt{UE}}[n]\bigr)$ over $\hat{d}_{\mathtt{SE}}$ can be obtained by finding the minimum distance of $\hat{d}_{\mathtt{SE}}$ corresponding to $\ddot{d}_{\mathtt{SE}}$. In particular, based on geometric property, $ \mathbf{\ddot{c}}_{\mathtt{E}} $ can be easily determined as in Fig. \ref{fig:intersection}, while satisfying the condition in \eqref{supremum}. Finally, $\ddot{d}_{\mathtt{SE}}$ can be calculated as $\ddot{d}_{\mathtt{SE}}=f_{\mathtt{d}}(\mathbf{\ddot{c}}_{\mathtt{E}},\mathbf{c}_{\mathtt{ST}})$.

We should note that $\hat{d}_{\mathtt{SE}}$ and $\hat{d}_{\mathtt{UE}}[n]$ are not independent of the Eve's location $\mathbf{\hat{c}}_{\E}$. This leads to the fact that the supremum of $R_{\mathtt{E}}^{\mathtt{UB}}[n]\bigl(\hat{d}_{\mathtt{SE}},\hat{d}_{\mathtt{UE}}[n]\bigr)$ cannot be  determined only over the set of $\hat{d}_{\mathtt{SE}}$. In addition, there is  no basis to say that when Eve is located at $ \mathbf{\ddot{c}}_{\mathtt{E}} $ as shown in Fig. \ref{fig:intersection}, we can obtain the average worst-case secrecy rate. However, with a fixed point $ \mathbf{\ddot{c}}_{\mathtt{E}} $, we can compute $\ddot{d}_{\mathtt{SE}}\equiv \min_{\hat{d}_{\mathtt{SE}} \in \mathcal{D}}\{\hat{d}_{\mathtt{SE}}\} $, and then, obtain a ``strict''  worst-case secrecy rate which provides an upper bound of $R_{\mathtt{E}}^{\mathtt{UB}}[n]\bigl(\hat{d}_{\mathtt{SE}},\hat{d}_{\mathtt{UE}}[n]\bigr)$ regardless of a real location of Eve, i.e., $R_{\mathtt{E}}^{\mathtt{UB}}[n]\bigl(\ddot{d}_{\mathtt{SE}},\hat{d}_{\mathtt{UE}}[n]\bigr)\geq R_{\mathtt{E}}^{\mathtt{UB}}[n]\bigl(\hat{d}_{\mathtt{SE}},\hat{d}_{\mathtt{UE}}[n]\bigr)$. As a result, the strict worst-case objective function is derived as in \eqref{strict worst-case}.

\section{Proof of Lemma \ref{lemma 5}} \label{app E}
Constraint \eqref{estimation_error} can be rewritten  as
\begin{subnumcases}{\label{estimation_error_equi} \eqref{estimation_error}\Leftrightarrow}
\begin{bmatrix}
\Delta x_{\E} \\
\Delta y_{\E}
\end{bmatrix}^T\begin{bmatrix}
\Delta x_{\E} \\
\Delta y_{\E}
\end{bmatrix} - Q^{2} \leq 0, & \IEEEyessubnumber\label{estimation_error_equi:a}\\
f_{\mathtt{d}}(\mathbf{\hat{c}}_{\mathtt{E}}+(\Delta x_{\E}, \Delta y_{\E}, 0),\mathbf{c}_{\mathtt{U}}[n]) \leq \alpha_{\mathtt{E}}\left[n\right], \IEEEyessubnumber \label{estimation_erro_equi:b}
\end{subnumcases}
which is equivalent to the following constraints:
\begin{subnumcases}{\label{estimation_error_equi_2} \eqref{estimation_error_equi}\Leftrightarrow}
\begin{bmatrix}
\Delta x_{\E} \\
\Delta y_{\E}
\end{bmatrix}^T\begin{bmatrix}
\Delta x_{\E} \\
\Delta y_{\E}
\end{bmatrix} - Q^{2} \leq 0, & \IEEEyessubnumber\label{estimation_error_equi_2:a}\\
\begin{bmatrix}
\Delta x_{\E} \\
\Delta y_{\E}
\end{bmatrix}^T\begin{bmatrix}
\Delta x_{\E} \\
\Delta y_{\E}
\end{bmatrix} - 2\begin{bmatrix}
x_{\UAV}[n] - \hat{x}_{\mathtt{E}} \\
y_{\UAV}[n] - \hat{y}_{\mathtt{E}}
\end{bmatrix}^T\begin{bmatrix}
\Delta x_{\E} \\
\Delta y_{\E}
\end{bmatrix} \nonumber\\
 + f_{\mathtt{d}}(\mathbf{\hat{c}}_{\mathtt{E}},\mathbf{c}_{\mathtt{U}}[n]) - \alpha_{\mathtt{E}}[n] \leq 0. \IEEEyessubnumber \label{estimation_erro_equi_2:b}
\end{subnumcases}
By introducing $\theta_{\E}[n]$ such that
\begin{align}\label{slack_variable}
f_{\mathtt{d}}(\mathbf{\hat{c}}_{\mathtt{E}},\mathbf{c}_{\mathtt{U}}[n]) - \alpha_{\mathtt{E}}[n] \leq \theta_{\E}[n], \ \forall n\in\mathcal{N},
\end{align}
and applying $S$-procedure \cite{Stephen, ImperfectLocation} to \eqref{estimation_error_equi_2}, there exists
\begin{align}\label{lamda}
\mu[n] \geq 0, \ \forall n\in\mathcal{N},
\end{align}
such that 
\begin{align}\label{S-procedure}
&\begin{bmatrix}
1 & \begin{pmatrix}
\hat{x}_{\mathtt{E}} - x_{\UAV}[n] \\
\hat{y}_{\mathtt{E}} - y_{\UAV}[n]
\end{pmatrix} \\
\begin{pmatrix}
(\hat{x}_{\mathtt{E}} - x_{\UAV}[n]) & (\hat{y}_{\mathtt{E}} - y_{\UAV}[n])
\end{pmatrix} & \theta_{\E}[n]
\end{bmatrix} \nonumber\\
&\preceq 
\mu[n]
\begin{bmatrix}
1 & 0 \\
0 & -Q^{2}
\end{bmatrix}.
\end{align}
Although \eqref{S-procedure} is still intractable, we can apply Schur's complement \cite{RobustOptimization} to transform \eqref{S-procedure} into the convex constraint as
\begin{align}\label{Schur}
& &\begin{bmatrix}
1 & 0 & \hat{x}_{\mathtt{E}} - x_{\UAV}[n] \\
0 & 1 & \hat{y}_{\mathtt{E}} - y_{\UAV}[n] \\
\hat{x}_{\mathtt{E}} - x_{\UAV}[n] & \hat{y}_{\mathtt{E}} - y_{\UAV}[n] & \theta_{\E}[n]
\end{bmatrix}  \nonumber\\
&&\preceq 
\mu[n]
\begin{bmatrix}
1 & 0 & 0 \\
0 & 1 & 0 \\
0 & 0 & -Q^{2}
\end{bmatrix},
\end{align}
which is equivalent to  
\begin{align}\label{SOC}
\mathbf{S}[n]\triangleq\begin{bmatrix}
\mu[n]-1 & 0 & x_{\UAV}[n] - \hat{x}_{\mathtt{E}} \\
0 & \mu[n]-1 & y_{\UAV}[n] - \hat{y}_{\mathtt{E}} \\
x_{\UAV}[n] - \hat{x}_{\mathtt{E}} & y_{\UAV}[n] - \hat{y}_{\mathtt{E}} & -Q^{2}\mu[n]-\theta_{\E}[n]
\end{bmatrix} \succeq 
\mathbf{0}.
\end{align}
It is true that constraints \eqref{slack_variable}, \eqref{lamda} and \eqref{SOC} are convex,  and thus, the proof is completed.

\bibliographystyle{IEEEtran}
\bibliography{reference}
\end{document}